\newif\if@qeded
\def\qed{\hfill$\Box$\global\@qededtrue}
\def\qedif{\if@qeded\else\qed\fi\global\@qededfalse}
\newtheorem{defi}{Definition}
\newtheorem{theo}{Theorem}
\newtheorem{prop}{Proposition}
\newtheorem{lemm}{Lemma}
\newtheorem{coro}{Corollary}
\newtheorem{exam}{Example}
\newtheorem{obse}{Observation}
\newtheorem{post}{Postulate}
\newenvironment{definition}[1]{\begin{defi} \rm \label{df:#1} }{\end{defi}}
\newenvironment{theorem}[1]{\begin{theo} \rm \label{thm:#1} }{\end{theo}}
\newenvironment{proposition}[1]{\begin{prop} \rm \label{pr:#1} }{\end{prop}}
\newenvironment{corollary}[1]{\begin{coro} \rm \label{cor:#1} }{\end{coro}}
\newenvironment{example}[1]{\begin{exam} \rm \label{ex:#1} }{\end{exam}}
\newenvironment{observation}[1]{\begin{obse} \rm \label{obs:#1} }{\end{obse}}
\newenvironment{postulate}[1]{\begin{post} \rm \label{post:#1} }{\end{post}}
\newenvironment{proof}{\begin{trivlist} \item[\hspace{\labelsep}\bf Proof:]}
               {\qedif\end{trivlist}}
\newcommand{\df}[1]{Definition~\ref{df:#1}}
\newcommand{\thm}[1]{Theorem~\ref{thm:#1}}
\newcommand{\pr}[1]{Proposition~\ref{pr:#1}}
\newcommand{\cor}[1]{Corollary~\ref{cor:#1}}
\newcommand{\obs}[1]{Observation~\ref{obs:#1}}
\newcommand{\sect}[1]{Section~\ref{sec:#1}}
\newcommand{\tab}[1]{Table~\ref{tab:#1}}
\newfont{\bbb}{bbm10 scaled 1100}        
\newfont{\bbbs}{bbm10 scaled 900}        
\newcommand{\denote}[1]{\mbox{\bbb [}#1\mbox{\bbb ]}} 
\newcommand{\IN}{\mbox{\bbb N}}          
\newcommand{\IR}{\mbox{\bbb R}}          
\newcommand{\IT}{\mbox{\bbb T}}          
\newcommand{\T}{{\rm T}}                 
\newcommand{\D}{{\cal D}}                
\newcommand{\fL}{{\cal L}}               
\newcommand{\fT}{{\cal T}}               
\newcommand{\bC}{{\bf C}}                
\newcommand{\bD}{{\bf D}}                
\newcommand{\bE}{{\bf E}}                
\newcommand{\bU}{{\bf U}}                
\newcommand{\bV}{{\bf V}}                
\newcommand{\bW}{{\bf W}}                
\newcommand{\bZ}{{\bf Z}}                
\newcommand{\E}{E}                       
\newcommand{\p}{P}                       
\newcommand{\s}{s}                       
\newcommand{\B}{{\cal B}}                
\newcommand{\V}{\mathcal{X}}             
\newcommand{\stp}{\mbox{\sc stop}}       
\newcommand{\dv}{\mbox{\sc div}}         
\newcommand{\obox}{\mathbin\Box}         
\newcommand{\conceal}{/}                 
\newcommand{\plat}[1]{\raisebox{0pt}[0pt][0pt]{#1}}     
\newcommand{\dcup}{\stackrel{\mbox{\huge .}}{\cup}}	
\newcommand{\goto}[1]{\mathrel{\stackrel{#1}{\raisebox{0pt}
	[3pt][0pt]{$\longrightarrow$}}}}	        
\newcommand{\gonotto}[1]{\hspace{4pt}\not\hspace{-4pt}	
	\stackrel{#1\ }{\longrightarrow}}
\newcommand{\dto}[1]{\mathrel{\stackrel{#1}{\raisebox{.7pt} 
	[3pt][0pt]{$\scriptstyle=\hspace{-2pt}\Longrightarrow$}}}}
\newcommand{\dom}{{\it dom}}                            
\newcommand{\bis}[2][]{		             		
	\mathrel{\raisebox{.3ex}{$\;\underline{\makebox[.7em]{$\leftrightarrow$}}$}
                  \,_{#2}^{\,#1}}}
\newcommand{\eqa}{\mathrel{\plat{$\stackrel{\alpha}=$}}} 
\def\titlerunning{Musings on Encodings and Expressiveness}
\title{\titlerunning}
\author{Rob van Glabbeek
\institute{NICTA, Sydney, Australia}
\institute{School of Computer Science and Engineering,
University of New South Wales, Sydney, Australia}
\email{rvg@cs.stanford.edu}
}
\begin{document}
\maketitle

\begin{abstract}
This paper proposes a definition of what it means for one system
description language to encode another one, thereby enabling an ordering
of system description languages with respect to expressive power.
I compare the proposed definition with other definitions of encoding and
expressiveness found in the literature, and illustrate it on a case study:
comparing the expressive power of CCS and CSP.
\end{abstract}

\section{Introduction}

This paper aims at answering the question what it means for one
language to encode another one, and make this definition applicable to
order system description languages like CCS, CSP and the
$\pi$-calculus with respect to their expressive power.

To this end it proposes a unifying concept of correct translation between two languages,
and adapts it to translations \emph{up to} a semantic equivalence, for languages with a
denotational semantics that interprets the operators and recursion constructs as operations
on a set of values, called a \emph{domain}.  Languages can be partially ordered by their
expressiveness up to the chosen equivalence according to the existence of correct
translations between them.

The concept of a [correct] translation between system description languages (or \emph{process
calculi}) was first formally defined by Boudol \cite{Bo85}. There, and in most other
related work in this area, the domain in which a system description language is
interpreted consists of the closed expressions from the language itself. In \cite{vG94a} I
have reformulated Boudol's definition, while dropping the requirement that the domain of
interpretation is the set of closed terms. This allows (but does not
enforce) a clear separation of syntax and semantics, in the tradition of universal
algebra.  Nevertheless, the definition employed in \cite{vG94a} only deals with the case
that all (relevant) elements in the domain are denotable as the interpretations of closed
terms. Examples~\ref{ex:numbers} and~\ref{ex:undenotable} herein will present
situations where such a restriction is undesirable.
In addition, both \cite{Bo85} and \cite{vG94a} require the semantic equivalence $\sim$ under
which two languages are compared to be a congruence for both of them.
This is too severe a restriction to capture some recent encodings.

The current paper aims to generalise the concept of a correct translation as much as possible, so
that it is uniformly applicable in many situations, and not just in the world of process
calculi. Also, it needs to be equally applicable to encodability and separation results,
the latter saying that an encoding of one language in another does not exists.
At the same time, it tries to derive this concept from a unifying principle,
rather than collecting a set of criteria that justify a number of known
encodability and separation results that are intuitively justified.

In Sections~\ref{sec:up to} and~\ref{sec:respects} I propose in fact two notions of
encoding: \emph{correct} and \emph{valid} translations up to $\sim$. The former drops the
restriction on denotability and $\sim$ being a congruence for the whole target language,
but it requires $\sim$ to be a congruence for the source language, as well as the source's
image within the target. The latter drops both congruence requirements, but at the expense
of requiring denotability by closed terms.  In situations where $\sim$ is a congruence for
the source language's image within the target language \emph{and} all semantic values are
denotable, the two notions agree.  \advance\textheight 13.6pt

\section{Correct translations and expressiveness}

A language consists of \emph{syntax} and \emph{semantics}.
The syntax determines the valid expressions in the language.
The semantics is given by a mapping $\denote{\ \ }$ that associates
with each valid expression its meaning, which can for instance be an
object, concept or statement.
This mapping determines the set $\D$ of all objects, concepts or
statements that can be denoted in the language, namely as its image.

A correct translation of one language into another is a mapping
from the valid expressions in the first language to those in the
second, that preserves their meaning, i.e.\ such that the meaning of
the translation of an expression is the same as the meaning of the
expression being translated. In order to formalise this, I represent a
language $\fL$ as a pair $(\IT_{\fL},\denote{\ \ }_\fL)$ of a set
$\IT_{\fL}$ of valid expressions in $\fL$ and a surjective mapping
$\denote{\ \ }_\fL:\IT_\fL\rightarrow \D_\fL$ from $\IT_\fL$ in some
set of meanings $\D_\fL$.

\begin{definition}{translation}
A \emph{translation} from a language $\fL$ into a language $\fL'$ is a
mapping $\fT: \IT_\fL \rightarrow \IT_{\fL'}$. It is \emph{correct} when
$\denote{\fT(\E)}_{\fL'} = \denote{\E}_\fL$ for all $\E\in \IT_\fL$.
Language $\fL'$ is at least as \emph{expressive} as $\fL$ if a correct
translation exists.
\end{definition}
\begin{figure}[h]
\vspace{-13.6pt}
\begin{center}
\begin{picture}(6,3.5)(0,.5)
\put(0,3.5){\makebox[0pt]{\textit{dog}}}
\put(1,3.6){\vector(1,0){4}}
\put(3,3.7){$\fT$}
\put(-.4,2.4){$\denote{\ \ }_{\it English}$}
\put(5.2,2.4){$\denote{\ \ }_{\it French}$}
\put(.7,3){\vector(1,-1){.8}}
\put(5.3,3){\vector(-1,-1){.8}}
\put(6,3.5){\makebox[0pt]{\textit{chien}}}
\put(3,0){\makebox[0pt]{\includegraphics[width=3cm]{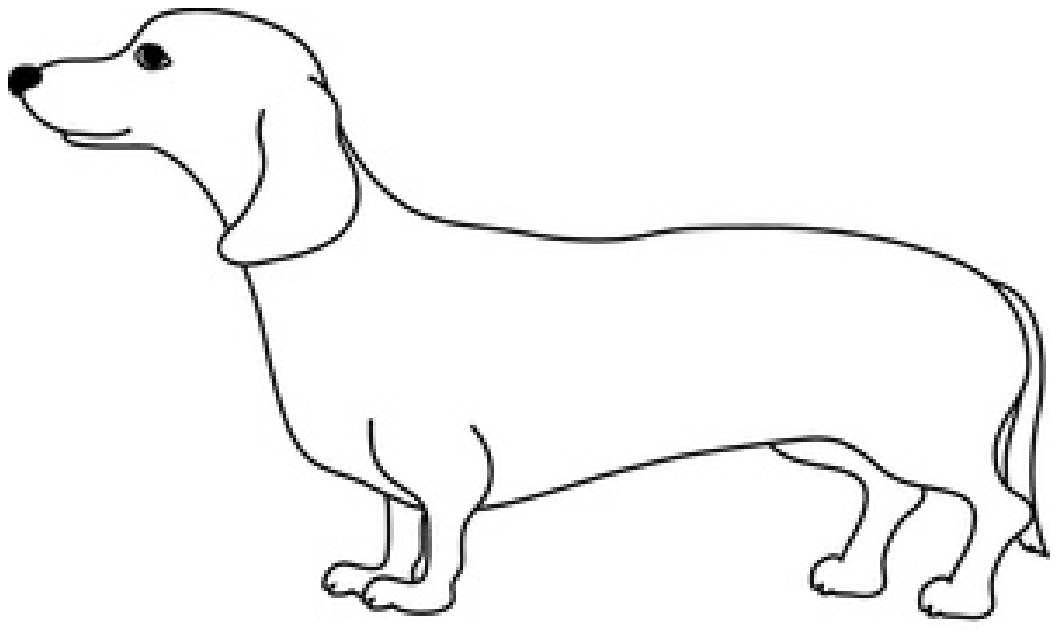}}}
\end{picture}
\end{center}
\caption{The essence of a correct translation}
\label{dog}
\end{figure}
This fundamental notion is illustrated in Figure~\ref{dog}.
It is not hard to see that a correct translation from $\fL$ to $\fL'$ exists
if and only if anything that can be expressed in $\fL$ can also be
expressed in $\fL'$, i.e.\ iff $\D_\fL \subseteq \D_{\fL'}$.%

In this paper I will argue that this simple notion of a correct translation,
when instantiated with appropriate proposals for $\denote{\ \ }$ and $\D$,
is a suitable definition of an encoding from one system description language
into another, and thereby a suitable basis for classifying such languages
w.r.t.\ expressiveness. 

\section{Dividing out a semantic equivalence}\label{sec:dividing out}

\begin{definition}{process graph}
A \emph{process graph} over an alphabet $Act$ is a triple
$(S,I,\rightarrow)$ with $S$ a set of \emph{states}, $I\in S$ the
\emph{initial state}, and $\mathord{\rightarrow} \subseteq S\times Act
\times S$ the \emph{transition relation}.
\end{definition}
In other words, a process graph is a labelled transition system
equipped with an initial state.

One way to apply the above definition of a translation to system
description languages like CCS and CSP would be to take variable-free
(and hence recursion-free) versions of those languages, and to define
the meaning $\denote{\p}$ of a CCS or CSP expression $\p$ to be the
process graph $G_P:=(S,\p,\rightarrow)$ with as set of states $S$ the set of
all CCS/CSP expressions, as initial state the expression $\p$, and
$\rightarrow$ being the transition relation generated by the standard
structural operational semantics of these languages. A variant of this
idea is to reduce $S$ to the states that are \emph{reachable} from $\p$
by following transitions.

Now it happens to be case that the reachable part of each process
graph that can be denoted by a CSP expression is \emph{isomorphic},
but in general not \emph{equal}, to one that can be denoted by a CCS
expression. As an example consider the CCS and CSP constants for
\emph{inaction}. In CCS this constant is called $0$
whereas in CSP it is called $\stp$. The operational semantics
generates no outgoing transitions of either process. It is therefore
tempting to translate the CSP constant $\stp$ into the CCS constant $0$.
Yet, this is not a correct translation in the current set-up, as
the process graph with initial state $0$ and no other states or
transitions is different from the one with initial state $\stp$.

One way to deal with this anomaly is to relax \df{translation} by
defining an appropriate semantic equivalence $\sim$ on $\D_\fL \cup
\D_{\fL'}$ and merely requiring that the meanings of an expression
and its translation are \emph{equivalent}.

\begin{definition}{translation up to}
A translation $\fT: \IT_\fL \rightarrow \IT_{\fL'}$ from a language $\fL$ into a language $\fL'$
is \emph{correct up to} a semantic equivalence $\sim$ on $\D_\fL \cup\D_{\fL'}$ when
$\denote{\fT(\E)}_{\fL'} \sim \denote{\E}_\fL$ for all $\E\in \IT_\fL$.
\end{definition}
In the example above, an appropriate candidate for $\sim$ could be
isomorphism of reachable parts.

In some sense, introducing an appropriate semantic equivalence $\sim$,
or maybe a preorder, appears to be the only reasonable way to allow
intuitively correct translations, such as of 0 by \stp. Nevertheless,
it need not be seen as a relaxation---and hence abandonment---of
\df{translation}, but rather as an appropriate instantiation.
Namely the meaning of a CCS or CSP expression $\p$ is no
longer a process graph $G$, but instead the equivalence class $[G]_\sim$
of all process graphs in $\D_{\rm CCS} \cup \D_{\rm CSP}$ that are
equivalent to $G$.

\begin{observation}{translation up to}
Let $\fL=(\IT_\fL,\denote{\ \ }_\fL)$ and
$\fL'=(\IT_{\fL'},\denote{\ \ }_{\fL'})$ be two languages, and $\fT: \IT_\fL \rightarrow \IT_{\fL'}$
a correct translation between them up to an equivalence $\sim$ on $\D_\fL \cup\D_{\fL'}$.
Then $\fT$ is a correct translation between the languages
$(\IT_\fL,\denote{\ \ }_\fL^\sim)$ and $(\IT_{\fL'},\denote{\ \ }_{\fL'}^\sim)$,
where $\denote{\E}_\fL^\sim$ is defined to be $[\denote{\E}_\fL]_\sim$.
\end{observation}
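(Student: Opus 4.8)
The plan is to unwind the definitions and observe that the statement is essentially immediate once we check that $\denote{\ \ }_\fL^\sim$ is a well-defined surjection onto its image, so that $(\IT_\fL,\denote{\ \ }_\fL^\sim)$ is a bona fide language in the sense fixed before \df{translation}. First I would record that, by construction, $\denote{\E}_\fL^\sim := [\denote{\E}_\fL]_\sim$ maps each valid expression of $\fL$ to an equivalence class of $\sim$ on $\D_\fL\cup\D_{\fL'}$; taking $\D_\fL^\sim$ to be the set $\{[\denote{\E}_\fL]_\sim \mid \E\in\IT_\fL\}$ of such classes makes $\denote{\ \ }_\fL^\sim:\IT_\fL\to\D_\fL^\sim$ surjective, as required. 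The analogous remark applies to $\fL'$, yielding the language $(\IT_{\fL'},\denote{\ \ }_{\fL'}^\sim)$ with codomain $\D_{\fL'}^\sim$.

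Next I would verify the correctness condition of \df{translation} for $\fT$ between these two new languages. By hypothesis $\fT$ is correct up to $\sim$, i.e.\ $\denote{\fT(\E)}_{\fL'} \sim \denote{\E}_\fL$ for every $\E\in\IT_\fL$. Since $\sim$ is an equivalence relation on $\D_\fL\cup\D_{\fL'}$, two $\sim$-related elements determine the same equivalence class, so $[\denote{\fT(\E)}_{\fL'}]_\sim = [\denote{\E}_\fL]_\sim$, that is, $\denote{\fT(\E)}_{\fL'}^\sim = \denote{\E}_\fL^\sim$, for all $\E\in\IT_\fL$. This is precisely the requirement that $\fT$ be a correct translation from $(\IT_\fL,\denote{\ \ }_\fL^\sim)$ into $(\IT_{\fL'},\denote{\ \ }_{\fL'}^\sim)$.

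The only point that needs a word of care — and the closest thing to an obstacle — is checking that the formula $[\denote{\fT(\E)}_{\fL'}]_\sim = [\denote{\E}_\fL]_\sim$ makes sense as an equation between elements of a common set: $\denote{\fT(\E)}_{\fL'}^\sim$ a priori lives in $\D_{\fL'}^\sim$ and $\denote{\E}_\fL^\sim$ in $\D_\fL^\sim$, whereas \df{translation} as stated compares $\denote{\fT(\E)}_{\fL'}$ with $\denote{\E}_\fL$ inside the single codomain of the target language. This is resolved by the fact that $\sim$ is defined on the \emph{union} $\D_\fL\cup\D_{\fL'}$, so both $\D_\fL^\sim$ and $\D_{\fL'}^\sim$ are subsets of the quotient $(\D_\fL\cup\D_{\fL'})/{\sim}$ and the equality of classes is an honest identity there; one may harmlessly take $\D_{\fL'}^\sim$ to be all of $(\D_\fL\cup\D_{\fL'})/{\sim}$, or note that correctness of $\fT$ forces $\D_\fL^\sim\subseteq\D_{\fL'}^\sim$ anyway, exactly mirroring the remark after \df{translation} that a correct translation exists iff $\D_\fL\subseteq\D_{\fL'}$. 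No recursion-theoretic or combinatorial work is involved; the content is entirely the bookkeeping that "equivalent meanings" and "equal quotient-meanings" say the same thing.
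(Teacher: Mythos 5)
Your proof is correct and is essentially the argument the paper intends: the Observation is stated without an explicit proof precisely because it follows immediately from unwinding the definitions ($\denote{\fT(\E)}_{\fL'}\sim\denote{\E}_\fL$ iff the $\sim$-classes coincide). Your closing remark about placing both quotient codomains inside $(\D_\fL\cup\D_{\fL'})/{\sim}$ matches the paper's own discussion immediately after the Observation about using a common class containing both $\D_\fL$ and $\D_{\fL'}$.
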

Hence, correct translations up to some equivalence can be seen as special
cases of correct translations. In doing so, it may appear problematic that the
meaning $\denote{\E}_\fL^\sim$ of an expression $\E\in\IT_\fL$ becomes
dependent on the semantic domain $\D_{\fL'}$ of the other language,
namely by $\denote{\E}_\fL^\sim$ being the class of all processes in
$\D_\fL \cup\D_{\fL'}$ that are equivalent with $\denote{\E}_\fL$. This
worry can be alleviated by using, instead of $\D_\fL \cup\D_{\fL'}$,
a natural class of which both $\D_\fL$ and $\D_{\fL'}$ are subsets.
In the example above this could for instance be the class of all
process graphs (over a suitable alphabet).

\section{Translating operators}\label{sec:operators}

Up to isomorphism of reachable parts, so certainly up to coarser
equivalences such as strong bisimilarity, the variable-free fragments
of CSP and CCS with finitary choice are equally expressive. Namely each of them can express
exactly the (equivalence classes of) finite process graphs. Here a
process graph is finite if it has finitely many states and
transitions, and no loops. In fact, these languages do not lose any
expressiveness when omitting their parallel compositions, for parallel
composition is not needed to denote any finite process graph.

Hence the treatment above does not address the question whether one of
the \emph{operators} of one language, such as parallel composition,
can be mimicked by an operator or combination of operators in the
other.  This is to be blamed on the absence of variables. Once we
admit variables in the language, the CCS parallel composition
corresponds to the CCS expression $X|Y$, where $X$ and $Y$ are
process variables, and a correct translation to CSP ought to translate
this expression to a valid CSP expression---a CSP context built from
CSP operators and the variables $X$ and $Y$.

Henceforth, I consider single-sorted languages $\fL$ in which
\emph{expressions} or \emph{terms} are built from variables (taken
from a set $\V$) by means of operators (including constants) and
possibly recursion constructs.\footnote{In \sect{compositionality}
two postulates will be presented that restrict the class of languages
considered in this paper.}
The semantics of such a language is given by a domain of
values $\bD$, and an interpretation of each $n$-ary operator $f$ of
$\fL$ as an $n$-ary operation $f^\bD: \bD^n\rightarrow \bD$ on $\bD$.
Using the equations $$\denote{X}_\fL(\rho) = \rho(X) \qquad \mbox{and} \qquad
\denote{f(E_1,\ldots,E_n)}_\fL(\rho) = f^\bD(\denote{E_1}_\fL(\rho),\ldots,\denote{E_n}_\fL(\rho))$$
this allows an inductive definition of the meaning $\denote{\E}_\fL$
of an $\fL$-expression $\E$ as a function of type
$(\V\!\!\rightarrow\bD)\rightarrow\bD$, associating a value
$\denote{E}_\fL(\rho) \mathbin\in \bD$ to $E$ that depends on the choice of a
\emph{valuation} $\rho\!:\V\!\!\!\rightarrow\!\bD$. The valuation associates a
value from $\bD$ with each variable.  Moreover, $\denote{E}_\fL(\rho)$
only depends on the restriction of $\rho$ to those variables that
occur free in $\E$. In this setting, the class $\D_\fL$ of possible
meanings of $\fL$-expressions is a subclass of
$(\V\!\!\rightarrow\bD)\rightarrow\bD$.  Hence, a translation $\fT: \IT_\fL \rightarrow \IT_{\fL'}$
between two such languages $\fL$ and $\fL'$ that employ the same set $\V$ of
variables and are interpreted in the same domain $\bD$ is correct
when $\denote{\fT(\E)}_{\fL'}(\rho) = \denote{\E}_\fL(\rho)$ 
for all $\E\in \IT_\fL$ and all valuations $\rho:\V\rightarrow\bD$.

Since normally the names of variables are irrelevant and the
cardinality of the set of variables satisfies only the requirement
that it is ``sufficiently large'', no generality is lost by insisting
that two (system description) languages whose expressiveness is being
compared employ the same set of (process) variables. On the other
hand, two languages $\fL$ and $\fL'$ may be interpreted in different
domains of values $\bD$ and $\bD'$. Without dividing out a semantic
equivalence, one must insist that $\bD\subseteq \bD'$;
otherwise no correct translation from $\fL$ into $\fL'$ exists.  When
$\bD\subseteq \bD'$ also $(\V\rightarrow\bD) \subseteq(\V\rightarrow\bD')$,
so any function $(\V\rightarrow\bD')\rightarrow\bD'$ restricts to a function
$(\V\rightarrow\bD)\rightarrow\bD'$. For the purpose of comparing the
expressive power of $\fL$ and $\fL'$, the semantics of $\fL'$ can be
taken to be the mapping $\denote{\ \ }_{\fL'}:\IT_{\fL'} \rightarrow
((\V\rightarrow\bD)\rightarrow\bD')$, where $\denote{E}_{\fL'}(\rho)$
with $E\in\IT_{\fL'}$ is considered for valuations $\rho:\V\rightarrow\bD$ only.
This restriction entails that when translating $\fL$ into $\fL'$ I
compare the meaning of $\fL$-expressions and their translations only
under valuations within the domain $\bD$ in which $\fL$ is interpreted.
A translation $\fT: \IT_\fL \rightarrow \IT_{\fL'}$ from $\fL$ to $\fL'$
remains correct when $\denote{\fT(\E)}_{\fL'}(\rho) = \denote{\E}_\fL(\rho)$ 
for all $\E\in \IT_\fL$ and all valuations $\rho:\V\rightarrow\bD$.

\begin{example}{numbers}
Let $\fL$ be the language whose syntax consists of a binary operator $+$,
interpreted as addition in the domain $\IN$ of the natural numbers.
So $\IT_\fL$ contains expressions such as $X+(Y+Z)$.
$\fL'$ is the language with unary operators $e^x$ and $\ln(x)$, interpreted as exponentiation
and the natural logarithm on the reals $\IR$, as well as the binary operator $\times$
of multiplication. If you do not like partial functions, the domain $\IR$ can be extended with a
special value $\bot$ to capture undefined outcomes. Note that $\IN\subset\IR$.
Using that $\ln(e^x)=x$, the $\fL$-expression $X+Y$ can be translated into the
$\fL'$-expression $\ln(e^X\times e^Y)$. Using this, a translation $\fT:\IT_\fL \rightarrow\IT_{\fL'}$
is defined inductively by $\fT(X):=X$ and $\fT(E+F):=\ln(e^{\fT(E)}\times e^{\fT(E)})$.
\end{example}

\section{Correct translations up to a congruence}\label{sec:up to}

This section aims at integrating the instantiations of the notion of a
correct translation proposed in Sections~\ref{sec:dividing out} and~\ref{sec:operators}.
Let $\fL$ and $\fL'$ be two languages of the type considered in
\sect{operators}, with semantic mappings 
$\denote{\ \ }_{\fL}:\IT_{\fL} \rightarrow ((\V\rightarrow\bV)\rightarrow\bV)$
and
$\denote{\ \ }_{\fL'}:\IT_{\fL'} \rightarrow ((\V\rightarrow\bV')\rightarrow\bV')$.
Here $\bV$ and $\bV'$ are domains of interpretation prior to
quotienting by an appropriate semantic equivalence; they might be sets of
process graphs with as states closed CCS expressions and closed CSP
expressions, respectively. In order to compare these languages
w.r.t.\ their expressive power I need a semantic equivalence $\sim$ that is defined
on a unifying domain of interpretation $\bZ$, with $\bV,\bV' \subseteq\bZ$.
Let $\bU:=\{v\in\bV'\mid \exists v\in \bV.~v'\sim v\}$.

\begin{definition}{equivalence of valuations}
Two valuations $\eta,\rho:\V\rightarrow\bZ$ are \emph{$\sim$-equivalent},
$\eta\sim\rho$, if $\eta(X)\sim\rho(X)$ for each $X\in \V$.
\end{definition}
In case there exists a $v\in \bV$ for which there is no
$\sim$-equivalent $v'\in \bV'$, there is no correct translation from
$\fL$ into $\fL'$ up to $\sim$.  Namely, the semantics of $\fL$
describes, among others, how any $\fL$-operator evaluates the argument
value $v$, and this aspect of the language has no counterpart in $\fL'$.
Therefore, I will require\vspace{-1.5ex}
\begin{equation}\label{related}
\forall v\in \bV.~ \exists v'\in \bV'.~ v'\sim v.
\end{equation}
This implies that for any valuation $\rho:\V\rightarrow\bV$ there is a
valuation $\eta:\V\rightarrow\bV'$ with $\eta\sim\rho$.

\begin{definition}{correct translation}
A translation $\fT$ from $\fL$ into $\fL'$ is \emph{correct up to $\sim$} iff
(\ref{related}) holds and\\ $\denote{\fT(\E)}_{\fL'}(\eta) \sim \denote{\E}_\fL(\rho)$ for all
$\E\in \IT_\fL$ and all valuations $\eta:\V\rightarrow\bV'$ and $\rho:\V\rightarrow\bV$ with
$\eta\sim\rho$.
\end{definition}
Note that a correct translation as defined in \sect{operators} is exactly a correct
translation up to the identity relation.
If a correct translation up to $\sim$ from $\fL$ into $\fL'$ exists, then $\sim$ must be a
congruence for $\fL$.

\begin{definition}{congruence}
An equivalence relation $\sim$ is a \emph{congruence} for a language
$\fL$ interpreted in  a semantic domain $\bV$ if
$\denote{E}_\fL(\nu)\sim\denote{E}_\fL(\rho)$
for any $\fL$-expression $E$ and any valuations
$\nu,\rho:\V\rightarrow \bV$ with $\nu \sim \rho$.
\end{definition}

\begin{proposition}{congruence}
If a correct translation up to $\sim$ from $\fL$ into $\fL'$ exists, then $\sim$ is a
congruence for $\fL$.
\end{proposition}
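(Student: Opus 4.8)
The plan is to reduce the congruence property of $\sim$ for $\fL$ directly to two applications of the correctness condition on $\fT$, bridged through a single common valuation in $\bV'$. No induction on the structure of $\fL$-expressions will be needed, since the defining clause of a correct translation up to $\sim$ already quantifies over all $\E\in\IT_\fL$.

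First I would record the one piece of infrastructure the argument rests on: the relation $\sim$ on valuations introduced in \df{equivalence of valuations} is itself an equivalence relation, its reflexivity, symmetry and transitivity following pointwise from the corresponding properties of $\sim$ on $\bZ$. I would also recall the consequence of (\ref{related}) already noted in the text, namely that every valuation $\V\rightarrow\bV$ has a $\sim$-equivalent valuation $\V\rightarrow\bV'$.

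Now fix an arbitrary $\fL$-expression $E$ and valuations $\nu,\rho:\V\rightarrow\bV$ with $\nu\sim\rho$; the goal is $\denote{E}_\fL(\nu)\sim\denote{E}_\fL(\rho)$. Using (\ref{related}), choose $\eta:\V\rightarrow\bV'$ with $\eta\sim\nu$. By transitivity of $\sim$ on valuations together with $\nu\sim\rho$, we also obtain $\eta\sim\rho$. Applying the correctness of $\fT$ up to $\sim$ to the pairs $(\eta,\nu)$ and $(\eta,\rho)$ then yields $\denote{\fT(E)}_{\fL'}(\eta)\sim\denote{E}_\fL(\nu)$ and $\denote{\fT(E)}_{\fL'}(\eta)\sim\denote{E}_\fL(\rho)$; symmetry and transitivity of $\sim$ on $\bZ$ give $\denote{E}_\fL(\nu)\sim\denote{E}_\fL(\rho)$, as desired.

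The only point requiring a moment's care — and the nearest thing to an obstacle — is that a \emph{single} witness $\eta$ must serve for both $\nu$ and $\rho$; this is precisely what transitivity of $\sim$ on valuations provides, and it is why one should first note that pointwise $\sim$ is genuinely an equivalence. Everything else is a routine chase through the definitions.
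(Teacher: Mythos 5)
Your proof is correct and is essentially identical to the paper's: both pick a single $\eta:\V\rightarrow\bV'$ with $\eta\sim\nu$ via (\ref{related}), note $\eta\sim\rho$ by transitivity, and apply the correctness of $\fT$ twice to obtain $\denote{E}_\fL(\nu)\sim\denote{\fT(E)}_{\fL'}(\eta)\sim\denote{E}_\fL(\rho)$. You merely spell out the equivalence-relation bookkeeping that the paper leaves implicit.
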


\begin{proof}
Let $\fT$ be a correct translation up to $\sim$ from $\fL$ into $\fL'$.
Let $E\in\IT_\fL$ and let $\nu,\rho:\V\!\rightarrow\bV$ with $\nu\mathord\sim\rho$.
By (\ref{related}) there is a valuation $\eta\!:\!\V\!\!\rightarrow\!\bV'$ with $\eta\mathbin\sim\nu$.
Hence $\denote{\E}_{\fL}(\nu) \mathbin\sim \denote{\fT(\E)}_{\fL'}(\eta) \mathbin\sim \denote{\E}_\fL(\rho)$.%
\end{proof}
The existence of a correct translation up to $\sim$ from $\fL$ into $\fL'$ does not imply
that $\sim$ is a congruence for $\fL'$. However, $\sim$ has the properties of a congruence for
those expressions of $\fL'$ that arise as translations of expressions of $\fL$, when
restricting attention to valuations into $\bU$. I call this a \emph{congruence for} $\fT(\fL)$.%
\begin{definition}{weak congruence}
Let $\fT: \IT_\fL \rightarrow \IT_{\fL'}$ be a translation from $\fL$ into $\fL'$.
An equivalence $\sim$ on $\IT_{\fL'}$ is a \emph{congruence for} $\fT(\fL)$
if $\denote{\fT(E)}_{\fL'}(\nu)\mathbin\sim\denote{\fT(E)}_{\fL'}(\eta)$
for any $E\mathbin\in\IT_\fL$ and $\nu,\eta\!:\!\V\!\!\!\rightarrow \!\bU$ with $\nu \mathbin\sim \eta$.%
\end{definition}

\begin{proposition}{weak congruence}
If a correct translation up to $\sim$ from $\fL$ into $\fL'$ exists, then $\sim$ is a
congruence for $\fT(\fL)$.
\end{proposition}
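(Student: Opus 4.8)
The plan is to unwind \df{weak congruence} and reduce the required statement, via a suitable pivot valuation into $\bV$, to two applications of the correctness-up-to-$\sim$ property of $\fT$. So, fix $E\in\IT_\fL$ and valuations $\nu,\eta:\V\rightarrow\bU$ with $\nu\sim\eta$; the goal is to show $\denote{\fT(E)}_{\fL'}(\nu)\sim\denote{\fT(E)}_{\fL'}(\eta)$.

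First I would exploit the definition $\bU=\{v'\in\bV'\mid \exists v\in\bV.~v'\sim v\}$: for each variable $X$, since $\nu(X)\in\bU$ there is some $w_X\in\bV$ with $\nu(X)\sim w_X$. Setting $\rho(X):=w_X$ defines a valuation $\rho:\V\rightarrow\bV$ with $\rho\sim\nu$ (in the sense of \df{equivalence of valuations}, i.e.\ pointwise). By symmetry and transitivity of $\sim$, applied pointwise, from $\nu\sim\rho$ and $\nu\sim\eta$ one also gets $\eta\sim\rho$.

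Now both $\nu$ and $\eta$ are valuations $\V\rightarrow\bV'$ (as $\bU\subseteq\bV'$) that are $\sim$-equivalent to the single valuation $\rho:\V\rightarrow\bV$. Applying the defining property of a correct translation up to $\sim$ (\df{correct translation}) to the pairs $(\nu,\rho)$ and $(\eta,\rho)$ yields $\denote{\fT(E)}_{\fL'}(\nu)\sim\denote{E}_\fL(\rho)$ and $\denote{\fT(E)}_{\fL'}(\eta)\sim\denote{E}_\fL(\rho)$. Symmetry and transitivity of $\sim$ then give $\denote{\fT(E)}_{\fL'}(\nu)\sim\denote{\fT(E)}_{\fL'}(\eta)$, as required. (Condition (\ref{related}) is part of the hypothesis that $\fT$ is correct up to $\sim$, so it is available for free; it is not otherwise needed here.)

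The argument is essentially routine; the only point requiring care is the construction of the pivot valuation $\rho$, and this is precisely where the restriction in \df{weak congruence} to valuations into $\bU$ — rather than into all of $\bV'$ — is essential: without it there need be no $\sim$-equivalent valuation into $\bV$ to mediate between $\nu$ and $\eta$, and indeed, as remarked in the text, $\sim$ need not be a congruence for $\fL'$ on arbitrary valuations.
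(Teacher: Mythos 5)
Your proof is correct and follows essentially the same route as the paper's: pick a pivot valuation $\rho:\V\rightarrow\bV$ with $\rho\sim\nu$ using the definition of $\bU$, apply correctness up to $\sim$ to both pairs $(\nu,\rho)$ and $(\eta,\rho)$, and chain the results by symmetry and transitivity. The paper compresses this into the single chain $\denote{\fT(\E)}_{\fL'}(\nu) \sim \denote{\E}_\fL(\rho) \sim\denote{\fT(\E)}_{\fL'}(\eta)$; you merely make the intermediate step $\eta\sim\rho$ explicit.
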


\begin{proof}
Let $\fT$ be correct up to $\sim$ from $\fL$ into $\fL'$.
Let $E\in\IT_\fL$ and let $\nu,\eta:\V\rightarrow\bU$ with $\nu\sim\eta$.
By definition of $\bU$ there is a $\rho:\V\rightarrow\bV$ with $\rho\sim\nu$.
Hence $\denote{\fT(\E)}_{\fL'}(\nu) \sim \denote{\E}_\fL(\rho) \sim\denote{\fT(\E)}_{\fL'}(\eta)$.
\end{proof}
In the rest of this section I will show how the concept of a correct transition up to
$\sim$ can be seen as an instantiation of the notion of correct translation, analogously
to the situation in \sect{dividing out}. To this end I need to unify the types of the
semantic mappings $\denote{\ \ }_{\fL}$ and $\denote{\ \ }_{\fL'}$, say as
$\denote{\ \ }_{\fL}:\IT_{\fL} \rightarrow ((\V\rightarrow\bE)\rightarrow\bD)$
and
$\denote{\ \ }_{\fL'}:\IT_{\fL'} \rightarrow ((\V\rightarrow\bE)\rightarrow\bD)$.%
\footnote{In fact, it suffices to obtain mappings
  $\denote{\ \ }_{\fL}:\IT_{\fL} \rightarrow ((\V\rightarrow\bE)\rightarrow\bD)$
  and
  $\denote{\ \ }_{\fL'}:\IT_{\fL'} \rightarrow ((\V\rightarrow\bE')\rightarrow\bD')$
  satisfying $((\V\rightarrow\bE)\rightarrow\bD) \subseteq ((\V\rightarrow\bE')\rightarrow\bD')$,
  and hence $\bE'=\bE$ and $\bD\subseteq\bD'$. However, any
  mapping $\denote{\ \ }_{\fL}:\IT_{\fL} \rightarrow ((\V\rightarrow\bE)\rightarrow\bD)$
  is also a mapping
  $\denote{\ \ }_{\fL}:\IT_{\fL} \rightarrow ((\V\rightarrow\bE)\rightarrow\bD')$,
  so one can just as well use $\bD'$ for $\bD$.}
This unification process involves dividing out the semantic equivalence $\sim$, as well as changing
the type of a semantic mapping without tampering with the essence of its meaning.
Below I propose two methods for doing so. The first method applies when $\sim$
is a congruence for both $\fL$ and $\fL'$, whereas the second merely
requires that it is a congruence for $\fL$.
In both cases, the semantic mappings $\denote{\ \ }_{\fL}$ and
$\denote{\ \ }_{\fL'}$ can be understood to be of types
$\IT_{\fL} \rightarrow ((\V\rightarrow\bV)\rightarrow\bZ)$ and
$\IT_{\fL'} \rightarrow ((\V\rightarrow\bV')\rightarrow\bZ)$, respectively.
Dividing out $\sim$ yields the quotient domain $\bD:=\bZ/\!\sim:=\{[z]_\sim\mid z\in \bZ\}$,
consisting of the $\sim$-equivalence classes of elements of $\bZ$,
together with the mappings
$\denote{\ \ }^\sim_\fL:\IT_{\fL} \rightarrow ((\V\rightarrow\bV)\rightarrow\bD)$ and
\plat{$\denote{\ \ }^\sim_\fL:\IT_{\fL'} \rightarrow ((\V\rightarrow\bV')\rightarrow\bD)$},
where $\denote{E}_\fL^{\sim}(\rho) := [\denote{E}_\fL(\rho)]_\sim$.

\subsection{Translations up to a congruence for both languages}
\label{sec:A}

Let $\sim$ be a congruence for both $\fL$ and $\fL'$.
Take $\bW := \{v'' \in \bZ \mid \exists v\in \bV.~ v\sim v''\}$
and likewise $\bW' := \{v'' \in \bZ \mid \exists v'\in \bV'.~ v'\sim v''\}$.
Furthermore, $\bC:=\bW/\!_\sim$ and $\bC':=\bW'/\!_\sim$.
By (\ref{related}), $\bW\subseteq\bW'$ and $\bC\subseteq\bC'\subseteq\bD$.

Now $\denote{\ \ }_\fL^\sim$ can be recast as a function of
type $\IT_{\fL} \rightarrow ((\V\rightarrow\bC)\rightarrow\bD)$;
namely by defining $\denote{E}_{\fL}^{\sim}(\theta)$ with
$\theta:\V\rightarrow\bC$ to be $\denote{E}_{\fL}^{\sim}(\rho)$, for
any valuation $\rho:\V\rightarrow\bV$ such that
$\theta(X)=[\rho(X)]_\sim$ for all $X\in \V$.
The congruence property of $\sim$ ensures that the value
\plat{$\denote{E}_{\fL}^{\sim}(\theta)\in\bD$} is independent of the
choice of the representatives $\rho(X)$ in the equivalence classes $\theta(X)$.

Likewise, $\denote{\ \ }_{\fL'}^\sim$ can be recast as a
function of type $\IT_{\fL'} \rightarrow((\V\rightarrow\bC')\rightarrow\bD)$,
which, as in \sect{operators}, can be restricted to a function of type
$\IT_{\fL'} \rightarrow((\V\rightarrow\bC)\rightarrow\bD)$.
A translation $\fT: \IT_\fL \rightarrow \IT_{\fL'}$ from $\fL$ into $\fL'$
can be defined to be \emph{correct up to} $\sim$ when (\ref{related}) holds and
$\denote{\fT(\E)}_{\fL'}^{\sim}(\theta) = \denote{\E}_\fL^{\sim}(\theta)$ for all
$\E\in \IT_\fL$ and all valuations $\theta:\V\rightarrow\bC$.
It is not hard to check that this definition agrees with \df{correct translation}.

\subsection{Translations up to a congruence for the source language}
\label{sec:B}

Let $\sim$ be a congruence for $\fL$.
Recast $\denote{\ \ }_{\fL}^\sim$ as a function of type
$\IT_{\fL} \rightarrow ((\V\rightarrow\bU)\rightarrow\bD)$
by defining $\denote{E}_{\fL}^{\sim}(\eta)$ with
$\eta:\V\rightarrow\bU$ to be $\denote{E}_{\fL}^{\sim}(\rho)$, for
any valuation $\rho:\V\rightarrow\bV$ with $\rho\sim\eta$.
The congruence property of $\sim$ ensures that the value
$\denote{E}_{\fL}^{\sim}(\eta)\in\bD$ is independent of the choice of
the representative valuation $\rho$.

Since $\bU\subseteq \bV$ also $(\V\rightarrow\bU) \subseteq
(\V\rightarrow\bV)$, and therefore any function
$(\V\rightarrow\bV)\rightarrow\bD$ restricts to a function
$(\V\rightarrow\bU)\rightarrow\bD$.
This way, $\denote{\ \ }_{\fL'}^\sim$ can be recast as
a function of type $\IT_{\fL'} \rightarrow
((\V\rightarrow\bU)\rightarrow\bD)$ as well, and unification is achieved.
Now a translation $\fT: \IT_\fL \rightarrow \IT_{\fL'}$ from $\fL$ into $\fL'$
can be defined to be \emph{correct up to} $\sim$ when (\ref{related}) holds and
$\denote{\fT(\E)}_{\fL'}^{\sim}(\eta) = \denote{\E}_\fL^{\sim}(\eta)$ for all
$\E\in \IT_\fL$ and all valuations $\eta:\V\rightarrow\bU$.
It is straightforward that this definition agrees with \df{correct translation}.

\section{A hierarchy of expressiveness preorders}

An equivalence $\sim$ on a class $\bZ$ is said to be \emph{finer},
\emph{stronger}, or \emph{more discriminating} than another
equivalence $\approx$ on $\bZ$ if $p \sim q \Rightarrow p\approx q$
for all $p,q\in\bZ$. 
\begin{theorem}{hierarchy}
Let $\fT: \IT_\fL \rightarrow \IT_{\fL'}$ be a translation from $\fL$ into $\fL'$, and let
$\sim,\approx$ be congruences for $\fT(\fL)$, with $\sim$ finer than $\approx$.
If $\fT$ is correct up to $\sim$, then it is also correct up to $\approx$.
\end{theorem}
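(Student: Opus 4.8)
The plan is to unfold \df{correct translation} for the coarser equivalence $\approx$ and check its two ingredients in turn. Condition~(\ref{related}) for $\approx$ is immediate: given $v\in\bV$, correctness up to $\sim$ supplies a $v'\in\bV'$ with $v'\sim v$, and since $\sim$ is finer than $\approx$ the same $v'$ satisfies $v'\approx v$. Hence the real content is the second ingredient: for every $\E\in\IT_\fL$ and all valuations $\eta:\V\rightarrow\bV'$ and $\rho:\V\rightarrow\bV$ with $\eta\approx\rho$, one must establish $\denote{\fT(\E)}_{\fL'}(\eta)\approx\denote{\E}_\fL(\rho)$.

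The obstacle here is that correctness up to $\sim$ only constrains $\denote{\fT(\E)}_{\fL'}$ on valuations that are $\sim$-related to a valuation into $\bV$, whereas the given $\eta$ is only $\approx$-related to $\rho$; so I cannot invoke correctness up to $\sim$ on the pair $(\eta,\rho)$ directly. The remedy is to interpose a bridge valuation. Using the pointwise consequence of~(\ref{related}) for $\sim$ recorded just after \df{equivalence of valuations}, choose, for the given $\rho:\V\rightarrow\bV$, a valuation $\eta':\V\rightarrow\bV'$ with $\eta'\sim\rho$. Correctness of $\fT$ up to $\sim$ then gives $\denote{\fT(\E)}_{\fL'}(\eta')\sim\denote{\E}_\fL(\rho)$, and $\sim$ being finer than $\approx$ upgrades this to $\denote{\fT(\E)}_{\fL'}(\eta')\approx\denote{\E}_\fL(\rho)$.

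It remains to swap $\eta'$ for $\eta$ on the left-hand side, and this is where the hypothesis that $\approx$ is a congruence for $\fT(\fL)$ enters. First, $\eta\approx\eta'$: for each $X\in\V$ we have $\eta(X)\approx\rho(X)$ by assumption and $\eta'(X)\sim\rho(X)$, hence $\eta'(X)\approx\rho(X)$, so symmetry and transitivity of $\approx$ give $\eta(X)\approx\eta'(X)$. Second, both $\eta$ and $\eta'$ take their values in the set $\{v'\in\bV'\mid\exists v\in\bV.\ v'\approx v\}$ --- for $\eta$ because $\eta(X)\approx\rho(X)\in\bV$, for $\eta'$ because $\eta'(X)\sim\rho(X)\in\bV$ and $\sim$ refines $\approx$ --- which is exactly the range of valuations on which \df{weak congruence}, read with respect to $\approx$, asserts the congruence property. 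Hence $\denote{\fT(\E)}_{\fL'}(\eta)\approx\denote{\fT(\E)}_{\fL'}(\eta')$, and one finishes by chaining this with the equivalence from the previous paragraph using transitivity.

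I expect the middle step --- realising that one must route through a $\sim$-related bridge valuation rather than applying correctness up to $\sim$ to $(\eta,\rho)$ directly --- to be the only genuinely non-routine point, together with the bookkeeping needed to see that $\eta$ and $\eta'$ both land in the domain on which $\approx$ is known to be congruent for $\fT(\fL)$. This last is precisely where the hypothesis that $\approx$ (and not merely the identity) is a congruence for $\fT(\fL)$ is used; by \pr{weak congruence} that hypothesis is in fact necessary, since any translation correct up to $\approx$ has it.
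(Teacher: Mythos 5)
Your proof is correct and follows essentially the same route as the paper's: interpose a bridge valuation $\eta'\sim\rho$ obtained from~(\ref{related}), apply correctness up to $\sim$ and weaken to $\approx$, then use the congruence of $\approx$ for $\fT(\fL)$ (after checking both valuations land in $\bU^\approx$) to replace $\eta'$ by $\eta$. Your explicit verification of condition~(\ref{related}) for $\approx$ and of the domains of the two valuations is slightly more careful bookkeeping than the paper's, but the argument is the same.
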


\begin{proof}
Let $\bU^\approx := \{v' \in \bV' \mid \exists v\in \bV.~ v\approx v''\}$.
Let $\fT$ be correct up to $\sim$.
Then $\denote{\fT(\E)}_{\fL'}(\eta) \sim \denote{\E}_\fL(\rho)$ for all
$\E\in \IT_\fL$ and all $\eta:\V\!\!\rightarrow\bV'$ and $\rho:\V\!\!\rightarrow\bV$
with $\eta\sim\rho$.
To establish that $\fT$ also is correct up to $\approx$, let $E\in\IT_\fL$,
$\nu:\V\!\!\rightarrow\bV'$ and $\rho:\V\!\!\rightarrow\bV$ with $\nu\approx\rho$.
Take $\eta:\V\rightarrow\bV'$ with $\eta\sim\rho$---it exists by (\ref{related}).
Then $\denote{\fT(\E)}_{\fL'}(\eta) \sim \denote{\E}_\fL(\rho)$ and hence
$\denote{\fT(\E)}_{\fL'}(\eta) \approx \denote{\E}_\fL(\rho)$.
By (\ref{related}) both $\eta$ and $\nu$ are of type
$\V\!\!\!\rightarrow\bU^\approx$.
Since $\approx$ is a congruence for $\fT(\fL)$ and $\nu\mathbin\approx\eta$,
$\denote{\fT(\E)}_{\fL'}(\nu) \approx \denote{\fT(\E)}_\fL(\eta) \approx \denote{\E}_{\fL'}(\rho)$.%
\end{proof}
When it is necessary to divide out a semantic equivalence, the quality
of a translation depends on the choice of this equivalence.  In no way
would I want to suggest that a language $\fL'$ is at least as
expressive as $\fL$ when there is a correct translation of $\fL$ up to
\emph{some} equivalence---the equivalence does \emph{not} appear in
the scope of an existential quantifier. In fact, this would make any
two languages equally expressive, namely by using the universal
equivalence, relating any two processes.
Instead, the equivalence needs to be chosen carefully to match the
intended applications of the languages under comparison. In general,
as show by \thm{hierarchy}, using a finer equivalence yields a
stronger claim that one language can be encoded in another.
On the other hand, when separating two languages $\fL$ and $\fL'$ by
showing that $\fL$ \emph{cannot} be encoded in $\fL'$, a coarser
equivalence generally yields a stronger claim.

The following corollary of \thm{hierarchy} is a powerful tool for
proving the nonexistence of translations.%

\begin{corollary}{congruence}
If there is a correct translation up to $\sim$ from $\fL$ into $\fL'$,
and $\approx$ is a congruence for $\fL'$ that is coarser than $\sim$,
then $\approx$ is a congruence for $\fL$.
\end{corollary}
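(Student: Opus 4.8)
The plan is to chain together the three results already established in this section, so that the corollary becomes essentially a one-line deduction. First I would note that, since a correct translation $\fT$ up to $\sim$ from $\fL$ into $\fL'$ is assumed to exist, \pr{weak congruence} tells us that $\sim$ is a congruence for $\fT(\fL)$. Second, since $\approx$ is a congruence for $\fL'$ — i.e.\ $\denote{E'}_{\fL'}(\nu)\approx\denote{E'}_{\fL'}(\rho)$ whenever $\nu\approx\rho$ for valuations ranging over $\bV'$ — and every expression in $\fT(\fL)$ is in particular an $\fL'$-expression while the valuations occurring in \df{weak congruence} range only over a subset of $\bV'$, it follows at once that $\approx$ is a congruence for $\fT(\fL)$ as well.

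With both $\sim$ and $\approx$ congruences for $\fT(\fL)$, and $\sim$ finer than $\approx$ (which is exactly what ``$\approx$ coarser than $\sim$'' means), I would then apply \thm{hierarchy} to $\fT$: it is correct up to $\sim$ by hypothesis, hence correct up to $\approx$. In particular a correct translation up to $\approx$ from $\fL$ into $\fL'$ exists, and \pr{congruence} applied to it yields that $\approx$ is a congruence for $\fL$, which is the assertion.

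The only step requiring a little attention is the second one: checking that ``$\approx$ is a congruence for $\fL'$'' really does entail ``$\approx$ is a congruence for $\fT(\fL)$'' in the precise form needed by \thm{hierarchy}. This comes down to reconciling the domain of valuations used for $\fT(\fL)$ — the analogue $\bU^{\approx}$ of $\bU$ associated with $\approx$ — with the domain $\bV'$ over which the congruence property for $\fL'$ is stated; since $\bU^{\approx}\subseteq\bV'$, the congruence property simply restricts, so no genuine obstacle arises. Everything else is a direct invocation of \pr{weak congruence}, \thm{hierarchy} and \pr{congruence}.
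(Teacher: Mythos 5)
Your proposal is correct and is exactly the argument the paper intends by its one-line proof ``by combining Theorem~\ref{thm:hierarchy} and Proposition~\ref{pr:congruence}'': you correctly supply the two facts needed to invoke Theorem~\ref{thm:hierarchy}, namely that $\sim$ is a congruence for $\fT(\fL)$ (via Proposition~\ref{pr:weak congruence}) and that the congruence property of $\approx$ for $\fL'$ restricts to one for $\fT(\fL)$ over $\bU^{\approx}\subseteq\bV'$. The remaining steps match the paper's route precisely.
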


\begin{proof}
By combining \thm{hierarchy} and \pr{congruence}.
\end{proof}

\begin{proposition}{identity}
If $\sim$ is a congruence for a language $\fL$, then the identity is a
correct translation up to $\sim$ from $\fL$ into itself.
\end{proposition}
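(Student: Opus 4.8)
The plan is to simply unfold \df{correct translation} in the special case $\fL'=\fL$ with $\fT$ the identity map, and observe that the two conditions it imposes reduce, respectively, to reflexivity of $\sim$ and to the defining clause of a congruence for $\fL$ (\df{congruence}). No machinery beyond these definitions is needed.

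First I would dispatch condition (\ref{related}). Since $\fL'=\fL$ we have $\bV'=\bV$, and as $\sim$ is an equivalence it is reflexive; hence for every $v\in\bV$ we may take $v':=v\in\bV'$, giving $v'\sim v$. So (\ref{related}) holds using only that $\sim$ is an equivalence, which is part of the hypothesis that it is a congruence.

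Next I would verify the correctness clause of \df{correct translation}. With $\fT=\mathord{\rm id}$ we have $\fT(\E)=\E$, so the requirement ``$\denote{\fT(\E)}_{\fL'}(\eta)\sim\denote{\E}_\fL(\rho)$ for all $\E\in\IT_\fL$ and all valuations $\eta,\rho:\V\rightarrow\bV'\;(=\bV)$ with $\eta\sim\rho$'' becomes exactly ``$\denote{\E}_\fL(\eta)\sim\denote{\E}_\fL(\rho)$ for all $\E\in\IT_\fL$ and all $\eta,\rho:\V\rightarrow\bV$ with $\eta\sim\rho$''. This is verbatim \df{congruence} for the language $\fL$, which holds by assumption. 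Here I use that \df{equivalence of valuations}, stated on a unifying domain $\bZ\supseteq\bV,\bV'$, specialises to the pointwise lifting of $\sim$ to valuations $\V\rightarrow\bV$, so the quantifier ranges in the two statements coincide.

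I do not anticipate a genuine obstacle: the proposition is a bookkeeping consistency check confirming that \df{congruence} was set up precisely to match the identity-translation instance of \df{correct translation}. The only point demanding a moment's care is that \df{correct translation} is phrased with two a priori different domains $\bV$, $\bV'$ sitting inside a common $\bZ$; here both collapse to $\bV$ and the equivalence on $\bZ$ restricts to the given congruence on $\bV$, after which everything lines up directly.
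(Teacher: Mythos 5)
Your proof is correct and matches the paper's, which simply states that the result follows immediately from Definitions~\ref{df:correct translation} and~\ref{df:congruence}; you have merely spelled out the unfolding (reflexivity of $\sim$ giving (\ref{related}), and the correctness clause collapsing to the congruence clause when $\fT=\mathord{\rm id}$ and $\bV'=\bV$). Nothing further is needed.
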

\begin{proof}
Immediately from Definitions~\ref{df:correct translation} and~\ref{df:congruence}.
\end{proof}

\begin{theorem}{composition}
If correct translations up to $\sim$ exists from $\fL_1$ into $\fL_2$ and from $\fL_2$
into $\fL_3$, then there is a correct translation up to $\sim$ from $\fL_1$ into $\fL_3$.
\end{theorem}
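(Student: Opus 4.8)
The plan is to show that the function composition $\fT := \fT_{23}\mathbin\circ\fT_{12}$ is the desired translation, where $\fT_{12}:\IT_{\fL_1}\rightarrow\IT_{\fL_2}$ and $\fT_{23}:\IT_{\fL_2}\rightarrow\IT_{\fL_3}$ are the given correct translations up to $\sim$. Write $\bV_1,\bV_2,\bV_3$ for the (pre-quotient) domains of interpretation of $\fL_1,\fL_2,\fL_3$, all living inside the common domain $\bZ$ on which $\sim$ is defined. Clearly $\fT$ is a translation from $\fL_1$ into $\fL_3$ in the sense of \df{translation}. I first check condition (\ref{related}) for $\fT$: given $v\in\bV_1$, condition (\ref{related}) for $\fT_{12}$ yields a $v'\in\bV_2$ with $v'\sim v$, and condition (\ref{related}) for $\fT_{23}$ then yields a $v''\in\bV_3$ with $v''\sim v'$; transitivity of $\sim$ gives $v''\sim v$, as required.

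For the main clause of \df{correct translation}, let $\E\in\IT_{\fL_1}$ and let $\eta:\V\rightarrow\bV_3$ and $\rho:\V\rightarrow\bV_1$ with $\eta\sim\rho$. The key step is to interpose a valuation $\theta:\V\rightarrow\bV_2$ that is $\sim$-equivalent to both $\rho$ and $\eta$. Such a $\theta$ exists by the observation following (\ref{related}) (applied to $\fL_1$ into $\fL_2$), namely that every valuation into $\bV_1$ has a $\sim$-equivalent counterpart into $\bV_2$: pick $\theta:\V\rightarrow\bV_2$ with $\theta\sim\rho$; since $\eta\sim\rho$ and $\sim$ is an equivalence, also $\eta\sim\theta$. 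Now apply correctness of $\fT_{12}$ to the pair $(\theta,\rho)$ to obtain $\denote{\fT_{12}(\E)}_{\fL_2}(\theta)\sim\denote{\E}_{\fL_1}(\rho)$, and apply correctness of $\fT_{23}$ to the term $\fT_{12}(\E)\in\IT_{\fL_2}$ together with the pair $(\eta,\theta)$ to obtain $\denote{\fT_{23}(\fT_{12}(\E))}_{\fL_3}(\eta)\sim\denote{\fT_{12}(\E)}_{\fL_2}(\theta)$. Since $\fT(\E)=\fT_{23}(\fT_{12}(\E))$, transitivity of $\sim$ yields $\denote{\fT(\E)}_{\fL_3}(\eta)\sim\denote{\E}_{\fL_1}(\rho)$, which is exactly what \df{correct translation} demands.

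I expect no serious obstacle: the argument is entirely bookkeeping with intermediate valuations. The only point needing a little care is ensuring the bridging valuation $\theta$ lands in $\bV_2$ — not merely in $\bZ$ — while being simultaneously $\sim$-related to the source valuation $\rho:\V\rightarrow\bV_1$ and the target valuation $\eta:\V\rightarrow\bV_3$; this is precisely where condition (\ref{related}) for the first translation, combined with transitivity of $\sim$, does the work. (Selecting $\theta$ variable by variable invokes a routine choice, exactly as in the remark following (\ref{related}).)
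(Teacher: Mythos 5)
Your proof is correct and follows essentially the same route as the paper's: compose the two translations, interpose a bridging valuation into $\bV_2$ that is $\sim$-equivalent to the given source valuation (its existence guaranteed by condition (\ref{related}) for the first translation), and chain the two correctness statements by transitivity of $\sim$. Your explicit verification that condition (\ref{related}) carries over to the composed translation is a small point of additional completeness that the paper leaves implicit.
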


\begin{proof}
For $i=1,2,3$ let $\denote{\ \ }_{\fL_i}:\IT_{\fL_i} \rightarrow
((\V\rightarrow\bV_i)\rightarrow\bV_i)$, and for $k=1,2$ let
$\fT_k:\IT_{\fL_k}\rightarrow\IT_{\fL_{k+1}}$ be correct translations up to
$\sim$ from $\fL_k$ to $\fL_{k+1}$. I will show that the translation
$\fT_2\circ\fT_1:\IT_{\fL_1}\rightarrow\IT_{\fL_3}$ from $\fL_1$ to $\fL_3$, given by
$\fT_2\circ\fT_1(E)=\fT_2(\fT_1(E))$, is a correct up to $\sim$.

By assumption, $\denote{\fT_1(\E)}_{\fL_2}(\eta) \sim \denote{\E}_{\fL_1}(\rho)$ for
all $\E\in \IT_{\fL_1}$ and all $\eta:\V\rightarrow\bV_2$ and
$\rho:\V\!\rightarrow\bV_1$ with $\eta\sim\rho$, and likewise
$\denote{\fT_2(F)}_{\fL_3}(\nu) \sim \denote{F}_{\fL_2}(\eta)$ for
all $F\in \IT_{\fL_2}$ and all $\nu:\V\!\rightarrow\bV_3$ and
$\eta:\V\!\rightarrow\bV_2$ with $\nu\sim\eta$.
Let $\E\in \IT_{\fL_1}$, $\nu:\V\!\rightarrow\bV_3$ and $\rho:\V\!\rightarrow\bV$
with $\nu\sim \rho$; I need to
show that \plat{$\denote{\fT_2\circ\fT_1(\E)}_{\fL_3}(\nu) \sim \denote{\E}_{\fL_1}(\rho)$}.

Let $\eta:\V\rightarrow\bV_2$ be a valuation with $\eta\sim\rho$---it exists by (\ref{related}).
Then $\nu\sim\eta$. Taking $F:=\fT_1(E)$ one obtains
$\denote{\fT_2(\fT_1(\E))}_{\fL_3}(\nu) \sim
\denote{\fT_1(\E)}_{\fL_2}(\eta)
\sim \denote{\E}_{\fL_1}(\rho)$.
\end{proof}

\begin{definition}{expressiveness}
A language $\fL'$ \emph{can express} or \emph{is at least as
  expressive as} a language $\fL$ \emph{up to $\sim$}, if there exists
a correct translation up to $\sim$ from $\fL$ into $\fL'$.
\end{definition}
\thm{composition} shows that this relation is transitive.
Restricted to languages for which $\sim$ is a congruence, it is even a preorder.

\section{Compositionality}\label{sec:compositionality}

A substitution in $\fL$ is a partial function $\sigma:\V\rightharpoonup\IT_\fL$ from the
variables to the $\fL$-expressions. For a given $\fL$-expression $E\in\IT_\fL$,
$E[\sigma]\in\IT_\fL$ denotes the $\fL$-expression $E$ in which each free occurrence of a
variable $X\in\dom(\sigma)$ is replaced by $\sigma(X)$, while renaming bound variables in $E$
so as to avoid a free variable $Y$ occurring in an expression $\sigma(X)$ ending up being
bound in $E[\sigma]$.
In general, a given expression $E\in\IT_\fL$ can be written in several ways as $F[\sigma]$.
For instance, if $\fL$ features a binary operator $f$, a unary operator $g$ and a constant
$c$, then the term $f(c,g(c))\in\IT_\fL$ can be written as $F[\sigma]$ with
\begin{itemize}
\item $F=f(X,Y)$, $\sigma(X)=c$ and $\sigma(Y)=g(c)$, or
\item $F=f(X,g(Y))$, $\sigma(X)=c$ and $\sigma(Y)=c$, or
\item $F=f(c,g(X))$ and $\sigma(X)=c$.
\end{itemize}
Likewise, in case $\fL$ contains a recursion construct $\textbf{fix}_XS$,
where $S$ is a set of recursion equations $Y=E_Y$, then the expression
$\textbf{fix}_X\{X=f(g(c),g(g(X)))\}$, in which the variable $X$ is bound,
can be written as $F[\sigma]$ with $F=\textbf{fix}_X\{X=f(Y,g(g(X)))\}$ and $\sigma(Y)=g(c)$.

\begin{definition}{prefix}
A term $E\mathbin\in\IT_\fL$ is a \emph{prefix} of a term $F\!$, written $E\mathbin\leq F\!$, if
$F\mathbin{\stackrel{\alpha}=}E[\sigma]$ for some substitution $\sigma$.
Here $\eqa$ denotes \emph{$\alpha$-recursion}, renaming of bound variables
while avoiding capture of free variables.
\end{definition}
Since $E[\textit{id}]=E$, where $\textit{id}:\V\rightarrow\IT_\fL$ is the identity,
and $E[\sigma][\xi]\eqa E[\xi\bullet\sigma]$, where the substitution $\xi\bullet\sigma$ is
given by $(\xi\bullet\sigma)(X)=\sigma(X)[\xi]$, it follows that $\leq$ is reflexive and
transitive, and hence a preorder.
Write $\equiv$ for the kernel of $\leq$, i.e.\ $E\equiv F$ iff $E\leq F \wedge F\leq E$.
If $E\equiv F$ then $E$ can be converted into $F$ by means of an injective renaming of its
variables.

\begin{definition}{head}
An term $H\in\IT_\fL$ is a \emph{head} if $H$ is not a single variable and $E\leq H$
implies that $E$ is single variable or $E\equiv H$.
It is a \emph{head of} another term $F$ if it is a head, as well as a prefix of $F$.
\end{definition}
$f(X,Y)$ is a head of $f(c,g(c))$, and $\textbf{fix}_X\{X=f(Y,g(g(X)))\}$ is a head of
$\textbf{fix}_X\{X=f(g(c),g(g(X)))\}$.

\begin{postulate}{head}
Each expression $E$, if not a variable, has a head, which is unique up to $\equiv$.
\end{postulate}
This is easy to show for each common type of system description language, and I am not
aware of any counterexamples.  However, while striving for maximal generality, I consider
languages with (recursion-like) constructs that are yet to be invented, and in view of
those, this principle has to be postulated rather than derived.
This means that here I consider only languages that satisfy this postulate.
I also limit attention to languages where the meaning of an expression is invariant under
$\alpha$-recursion.
\begin{postulate}{alpha}
If $E\eqa F$ then $\denote{E}_\fL=\denote{F}_\fL$.
\end{postulate}

The semantic mapping $\denote{\ \ }_{\fL}:\IT_{\fL} \rightarrow ((\V\!\rightarrow\bV)\rightarrow\bV)$
extends to substitutions $\sigma$ by $\denote{\sigma}_\fL(\rho)(X):=\denote{\sigma(X)}_\fL(\rho)$
for all $X\mathbin\in\V$ and $\rho:\V\!\!\rightarrow\bV$---here $\sigma$ is extended to a total function by
$\sigma(Y):=Y$ for all $Y\not\in\dom(\sigma)$. Thus $\denote{\sigma}_\fL$ is of type
$(\V\rightarrow\bV)\rightarrow(\V\rightarrow\bV)$, i.e.\ a map from valuations to valuations.
The inductive nature of the semantic mapping $\denote{\ \ }_\fL$ ensures that
\begin{equation}\label{inductive meaning}
\denote{E[\sigma]}_\fL(\rho) = \denote{E}_\fL(\denote{\sigma}_\fL(\rho))
\end{equation}
for all expressions $E\in\IT_\fL$, substitutions $\sigma:\V\rightharpoonup\IT_\fL$ and
valuations $\rho:\V\rightarrow\bV$. In case $E$ is $f(X_1,\ldots,X_n)$ this amounts to
$\denote{f(E_1,\ldots,E_n)}_\fL(\rho) = f^\bD(\denote{E_1}_\fL(\rho),\ldots,\denote{E_n}_\fL(\rho))$,
but (\ref{inductive meaning}) is more general and anticipates language constructs other
than functions, such as recursion.

\begin{definition}{compositionality}
A translation $\fT$ from $\fL$ to $\fL'$ is \emph{compositional} if
$\fT(E[\sigma])\eqa \fT(E)[\fT\circ\sigma]$ for each $E\in\IT_\fL$ and $\sigma:\V\rightharpoonup\IT_\fL$,
and moreover $\fT(X)=X$ for each $X\in\V$.
\end{definition}
In case $E=f(t_1,\ldots,t_n)$ for certain $t_i\in\IT_\fL$ this amounts to
$\fT(f(t_1,\ldots,t_n)) \eqa  E_f(\fT(t_1),\ldots,\fT(t_n))$,
where $E_f:=\fT(f(X_1,\ldots,X_n))$ and
$E_f(u_1,\ldots,u_n)$ denotes the result of the simultaneous
substitution in this expression of the terms $u_i\in\IT_{\fL'}$ for the free variables $X_i$,
for $i=1,\ldots,n$. Again, \df{compositionality} is more general and anticipates language
constructs other than functions, such as recursion.

\begin{theorem}{compositionality}
If any correct translation from $\fL$ to $\fL'$ up to $\sim$ exists,
then there exists a compositional translation that is correct up to $\sim$.
\end{theorem}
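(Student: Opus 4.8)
The plan is to start from an arbitrary correct translation $\fT$ up to $\sim$ from $\fL$ to $\fL'$ and to manufacture from it a compositional one, $\fT'$, by defining $\fT'$ recursively on the ``prefix'' structure of terms guaranteed by Postulate~\ref{post:head}. Concretely, I would define $\fT'(X):=X$ for every variable $X\in\V$, and for a non-variable expression $E$ I would take a head $H$ of $E$ (unique up to $\equiv$ by Postulate~\ref{post:head}), write $E\eqa H[\sigma]$ for a suitable substitution $\sigma$ whose values are proper sub-expressions of $E$ (in the $\leq$-ordering), and set $\fT'(E):=\fT(H)[\fT'\circ\sigma]$. The recursion is well-founded because each $\sigma(X)$ is a strictly smaller term than $E$ with respect to $\leq$; here one must check that the $\sigma(X)$ are either variables (handled by the base case) or non-variables with a head of their own, so the recursion descends. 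Compositionality of $\fT'$ is then essentially built in: for $E=f(t_1,\dots,t_n)$ the head is $f(X_1,\dots,X_n)$ and the definition gives $\fT'(f(t_1,\dots,t_n))\eqa\fT(f(X_1,\dots,X_n))[\fT'\circ\sigma]= E_f(\fT'(t_1),\dots,\fT'(t_n))$, and more generally $\fT'(E[\tau])\eqa\fT'(E)[\fT'\circ\tau]$ follows by an induction on the structure of $E$ using $E[\sigma][\xi]\eqa E[\xi\bullet\sigma]$ and Postulate~\ref{post:alpha}.

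The substantive part is showing $\fT'$ is still correct up to $\sim$, i.e.\ $\denote{\fT'(E)}_{\fL'}(\eta)\sim\denote{E}_\fL(\rho)$ whenever $\eta\sim\rho$. I would prove this by induction on the $\leq$-rank of $E$. For a variable it is immediate. For $E\eqa H[\sigma]$ a non-variable, I compute, using (\ref{inductive meaning}) and Postulate~\ref{post:alpha},
$$\denote{\fT'(E)}_{\fL'}(\eta)=\denote{\fT(H)[\fT'\circ\sigma]}_{\fL'}(\eta)=\denote{\fT(H)}_{\fL'}\bigl(\denote{\fT'\circ\sigma}_{\fL'}(\eta)\bigr).$$
Let $\eta':=\denote{\fT'\circ\sigma}_{\fL'}(\eta)$, so $\eta'(X)=\denote{\fT'(\sigma(X))}_{\fL'}(\eta)$, and let $\rho':=\denote{\sigma}_\fL(\rho)$, so that $\denote{E}_\fL(\rho)=\denote{H}_\fL(\rho')$. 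By the induction hypothesis applied to each $\sigma(X)$ (a strictly smaller term), and using $\eta\sim\rho$, we get $\eta'(X)=\denote{\fT'(\sigma(X))}_{\fL'}(\eta)\sim\denote{\sigma(X)}_\fL(\rho)=\rho'(X)$ for every $X$, hence $\eta'\sim\rho'$. Now apply correctness of the original $\fT$ up to $\sim$ to the term $H$ with the pair $\eta'\sim\rho'$: $\denote{\fT(H)}_{\fL'}(\eta')\sim\denote{H}_\fL(\rho')$. Chaining the equalities and the $\sim$-step gives $\denote{\fT'(E)}_{\fL'}(\eta)\sim\denote{E}_\fL(\rho)$, as required. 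Condition (\ref{related}) is unaffected since it does not mention the translation.

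The main obstacle I anticipate is the well-definedness and well-foundedness of the recursion defining $\fT'$: a head $H$ of $E$ is only unique up to $\equiv$, and correspondingly the decomposition $E\eqa H[\sigma]$ is only unique up to an injective renaming of the variables of $H$ together with the matching re-indexing of $\sigma$, so I need to check that $\fT(H)[\fT'\circ\sigma]$ does not depend on this choice — which follows because $\fT(H)$ changes only by the same injective renaming (as $\fT$ is a plain function on terms, the value $\fT(H)$ for $\equiv$-related heads may differ, so in fact one should \emph{fix} one representative head per $\equiv$-class in advance) and because $\fT'(X)=X$, so a renaming of $H$'s variables is absorbed. One must also confirm the descent: if the chosen head of $E$ forces some $\sigma(X)$ to equal $E$ up to $\equiv$, that would break well-foundedness, but the definition of \emph{head} (\df{head}) precisely rules this out — any $\sigma(X)$ is a proper sub-term, hence either a variable or $\leq$-smaller than $E$. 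Once these bookkeeping points are settled, the correctness argument above is routine. \qed
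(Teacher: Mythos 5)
Your proposal is correct and follows essentially the same route as the paper's own proof: fix a representative head per $\equiv$-class, define the new translation recursively by $\fT'(X)=X$ and $\fT'(E)=\fT(H)[\fT'\circ\sigma]$ for $E\eqa H[\sigma]$, establish compositionality by structural induction via $E[\sigma][\xi]\eqa E[\xi\bullet\sigma]$, and establish correctness by structural induction, showing the induced valuations $\denote{\fT'\circ\sigma}_{\fL'}(\eta)$ and $\denote{\sigma}_\fL(\rho)$ are $\sim$-equivalent and then invoking correctness of the original translation on the head together with equation (\ref{inductive meaning}). The well-definedness and well-foundedness points you flag are exactly the ones the paper handles by choosing representatives and noting that each $\sigma(X)$ is a proper subterm.
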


\begin{proof}
Pick a representative from each $\equiv$-equivalence class of terms.
With \emph{the head of an expression} $E$ I mean the chosen representative out of the
$\equiv$-equivalence class of heads of $E$. Now each term $E\notin\V$ can uniquely be written as
$H[\sigma]$, with $H$ the head of $E$ and $\dom(\sigma)$ the set of free variables of $H$.

Given a correct translation $\fT_0$, define the translation $\fT$ inductively by
\begin{center}
\begin{tabular}{ll}
$\fT(X) := X$ & for $X\in \V$\\
$\fT(E) := \fT_0(H)[\fT\circ\sigma]$ & when $E\eqa H[\sigma]$ as stipulated above.
\end{tabular}
\end{center}
First I show that $\fT$ is compositional, using induction on $E$.
So let $E\in\IT_\fL$ and $\xi:\V\rightarrow\IT_\fL$.
I have to show that $\fT(E[\xi])\eqa \fT(E)[\fT\circ\xi]$.
The case $E\in\V$ is trivial, so let $E\eqa H[\sigma]$.
For each free variable $X$ of $H$, $\sigma(X)$ is a proper subterm of $E$,
so by the induction hypothesis $\fT(\sigma(X)[\xi])\eqa \fT(\sigma(X))[\fT\circ\xi]$.
Thus
$\begin{array}[t]{@{}l@{~}ll@{}}
(\fT\!\circ(\xi\bullet\sigma))(X)
&= \fT((\xi\bullet\sigma)(X))
& \mbox{by definition of functional composition $\circ$}
\\
&= \fT(\sigma(X)[\xi])
& \mbox{by definition of the relation $\bullet$ between substitutions}
\\
&\eqa \fT(\sigma(X))[\fT\circ\xi]
& \mbox{by induction, derived above; trivial if $X\not\in\dom(\sigma)$}
\\
&= ((\fT\!\circ\xi)\bullet(\fT\!\circ\sigma))(X)
& \mbox{by definition of the relations $\circ$ and $\bullet$.}
\end{array}$\\
This shows that the substitutions $\fT\!\circ(\xi\bullet\sigma)$ and
$(\fT\!\circ\xi)\bullet(\fT\!\circ\sigma)$ are equal up to $\alpha$-recursion, from which it
follows that that $F[\fT\circ(\xi\bullet\sigma)] \eqa (F[\fT\circ\sigma])[\fT\circ\xi]$
for all terms $F\in\IT_{\fL'}$.
\\
$\begin{array}[b]{@{}l@{~}ll@{}}
\mbox{Hence}~ \fT(E[\xi])
&\eqa \fT(H[\sigma][\xi])
& \mbox{since $E\eqa H[\sigma]$.}
\\
&\eqa \fT(H[\xi\bullet\sigma])
& \mbox{by the identity used already in proving transitivity of $\leq$}
\\
&= \fT_0(H)[\fT\circ(\xi\bullet\sigma)]
& \mbox{by definition of $\fT$}
\\
&\eqa (\fT_0(H)[\fT\circ\sigma])[\fT\circ\xi]
& \mbox{derived above}
\\
&= \fT(H[\sigma])[\fT\circ\xi]
& \mbox{by definition of $\fT$}
\\
&\eqa \fT(E)[\fT\circ\xi]
& \mbox{since $E\eqa H[\sigma]$.}
\end{array}$

It remains to be shown that $\fT$ is correct up to $\sim$, i.e.\
that $\denote{\fT(\E)}_{\fL'}(\eta) \sim \denote{\E}_\fL(\rho)$ for all
terms $\E\in \IT_\fL$ and all valuations $\eta:\V\rightarrow\bV'$ and
$\rho:\V\rightarrow\bV$ with $\eta\sim\rho$.
Let $\eta$ and $\rho$ be such valuations.
I proceed with structural induction on $E$.
When handling a term $E\eqa H[\sigma]$, $\sigma(X)$ is a proper subterm of $E$ for each free variable $X$ of $H$.
So by the induction hypothesis $\denote{\fT(\sigma(X)}_{\fL'}(\eta) \sim \denote{\sigma(X)}_\fL(\rho)$.
The valuation $\denote{\sigma}_\fL(\rho)$ is defined such that
$\denote{\sigma}_\fL(\rho)(X)=\denote{\sigma(X)}_\fL(\rho)$ for each $X\in\V$.
Likewise, $\denote{\fT\circ\sigma}_{\fL'}(\eta)(X)=\denote{\fT(\sigma(X)}_{\fL'}(\eta)$ for each $X\in\V$.
Hence $\denote{\fT\circ\sigma}_{\fL'}(\eta) \sim \denote{\sigma}_\fL(\rho)$.\hfill(*)
\begin{list}{$\bullet$}{\leftmargin 10pt}
\item $\denote{\fT(X)}_{\fL'}(\eta) = \denote{X}_{\fL'}(\eta)
\begin{array}[t]{@{~}ll}
=\eta(X) & \mbox{by definitions of $\fT$ and $\denote{\ \ }_{\fL'}$} \\
\sim\rho(X) & \mbox{since $\eta\sim\rho$} \\
=\denote{X}_{\fL}(\rho) & \mbox{by definition of $\denote{\ \ }_{\fL}$.}
\end{array}$
\item $\denote{\fT(H[\sigma])}_{\fL'}(\eta)
\begin{array}[t]{@{~}ll@{}}
= \denote{\fT_0(H)[\fT\circ\sigma]}_{\fL'}(\eta) & \mbox{by definition of $\fT$} \\
= \denote{\fT_0(H)}_{\fL'}(\denote{\fT\circ\sigma}_{\fL'}(\eta)) & \mbox{by (\ref{inductive meaning})} \\
\sim \denote{H}_\fL(\denote{\sigma}_\fL(\rho)) & \mbox{by (*) above, as $\fT_0$ is a correct translation\,~~~~~~~~~~~} \\
= \denote{H[\sigma]}_\fL(\rho)  & \mbox{by (\ref{inductive meaning}).}
\hfill\mbox{\qed}
\end{array}$
\end{list}
\end{proof}

\noindent
Hence, for the purpose of comparing the expressive power of languages,
correct translations between them can be assumed to be compositional.

\section{Comparing the expressive power of CCS and CSP}

As an application of my approach, in this section I quantify the degree to which the
parallel composition of CSP can be expressed in CCS\@. It turns out that there exists a
correct translation up to trace equivalence, but not up to the version of weak bisimilarity
equivalence that takes divergence into account. This combination of an encoding and a
separation result is typical when comparing system description languages. Here we see that
for applications where divergence and branching time are a concern, the CSP parallel
composition cannot be encoded in CCS; however, when linear time reasoning is all that
matters, it can.

\subsection{CCS}

CCS \cite{Mi90ccs} is parametrised with a set ${\cal A}$ of {\em names}.
The set $\bar{\cal A}$ of {\em co-names} is $\bar\mathcal{A}:=\{\bar{a}
\mid a\in {\cal A}\}$, and $\mathcal{L}:=\mathcal{A} \cup \bar\mathcal{A}$ is the set of
\emph{labels}.  The function $\bar{\cdot}$ is extended to $\mathcal{L}$ by declaring
$\bar{\bar{\mbox{$a$}}}=a$. Finally, \plat{$Act := \mathcal{L}\dcup \{\tau\}$} is the set of
{\em actions}. Below, $a$, $b$, $c$, \ldots range over $\mathcal{L}$ and
$\alpha$, $\beta$ over $Act$.
A \emph{relabelling function} is a function $f:\mathcal{L}\rightarrow \mathcal{L}$ satisfying
$f(\bar{a})=\overline{f(a)}$; it extends to $Act$ by $f(\tau):=\tau$.
Let $\mathcal{X}$ be a set $X$, $Y$, \ldots of \emph{process variables}.
The set $\mathcal{E}$ of CCS terms or \emph{process expressions} is the smallest set
including:
\begin{center}
\begin{tabular}{lll}
$\alpha.E$  & for $\alpha\in Act$ and $E\in\mathcal{E}$ & \emph{prefixing}\\
$\sum_{i\in I}E_i$ & for $I$ an index set and $E_i\in\mathcal{E}$ & \emph{choice} \\
$E|F$ & for $E,F\in\mathcal{E}$ & \emph{parallel composition} \\
$E\backslash L $ & for $L\subseteq\mathcal{L}$ and $E\in\mathcal{E}$ & \emph{restriction} \\
$E[f]$ & for $f$ a relabelling function and $E\in\mathcal{E}$ & \emph{relabelling} \\
$X$ & for $X\in\mathcal{X}$ & a \emph{process variable} \\
$\textbf{fix}_XS$ & for $S:\mathcal{X}\rightharpoonup \mathcal{E}$
and $X\in \dom(S)$ & \emph{recursion}.
\end{tabular}
\end{center}
One writes $E_1+E_2$ for $\sum_{i\in I}E_i$ with $I=\{1,2\}$, and $0$ for $\sum_{i\in \emptyset}E_i$.
A partial function $S:\mathcal{X}\rightharpoonup \mathcal{E}$ is called a
\emph{recursive specification}.
The variables in its domain $\dom(S)$ are called \emph{recursion variables} and
the equations $Y=S(Y)$ for $Y\in\dom(S)$ \emph{recursion equations}.
A recursive specification $S:\mathcal{X}\rightharpoonup \mathcal{E}$ is traditionally
written as $\{Y=S(Y)\mid Y\in \dom(S)\}$.

\begin{table}[t]
\begin{center}
\framebox{$\begin{array}{ccc}
\alpha.E \goto{\alpha} E &
\displaystyle\frac{E_j \goto{\alpha} E_j'}{\sum_{i\in I}E_i \goto{\alpha} E_j'}\makebox[0pt][l]{~~($j\in I$)}
 \\[4ex]
\displaystyle\frac{E\goto{\alpha} E'}{E|F \goto{\alpha} E'|F} &
\displaystyle\frac{E\goto{a} E' ,~ F \goto{\bar{a}} F'}{E|F \goto{\tau} E'| F'} &
\displaystyle\frac{F \goto{\alpha} F'}{E|F \goto{\alpha} E|F'}\\[4ex]
\displaystyle\frac{E \goto{\alpha} E', ~\alpha\not\in L\cup\bar{L}}{E\backslash L \goto{\alpha}
E'\backslash L} &
\displaystyle\frac{E \goto{\alpha} E'}{E[f] \goto{f(\alpha)} E'[f]} &
\displaystyle\frac{S(X)[\textbf{fix}_YS/Y]_{Y\in \dom(S)} \goto{\alpha} E}{\textbf{fix}_XS\goto{\alpha}E}
\end{array}$}
\end{center}
\caption{Structural operational semantics of CCS}
\label{tab:CCS}
\end{table}

CCS is traditionally interpreted in the domain $\T_{\rm CCS}$ of closed CCS expressions up
to $\alpha$-recursion.
Hence a valuation $\rho:\V\rightarrow\T_{\rm CCS}$, valuating each variable as a closed
CCS expression, is just a closed substitution. The semantic mapping
$\denote{\ \ }_{\rm CCS}$ is given by $\denote{E}_{\rm CCS}(\rho) := E[\rho]$---a CCS
expression $E$ evaluates, under the valuation $\rho:\V\rightarrow\T_{\rm CCS}$, to the
result of performing the substitution $\rho$ on $E$. In fact, this is a common way to
provide many system description languages with a semantics. Consequently, the distinction
between syntax and semantics can, to a large extent, be dropped. It is for this reason
that the semantic interpretation function $\denote{\ \ }$ rarely occurs in papers on
CCS-like languages.

The ``real'' semantics of CCS is given by the labelled transition relation
$\mathord\rightarrow \subseteq \T_{\rm CCS}\times Act \times\T_{\rm CCS}$ between
closed CCS expressions. The transitions \plat{$p\goto{\alpha}q$} with $p,q\in\T_{\rm CCS}$
and $\alpha\in Act$ are derived from the rules of \tab{CCS}.
Formally a transition \plat{$p\goto{\alpha}q$} is part of the transition relation of CCS
if there exists a well-founded, upwards branching tree (a \emph{proof} of the transition)
of which the nodes are labelled by transitions, such that
\begin{itemize}\itemsep 0pt
\item the root is labelled by $p\goto{\alpha}q$, and
\item if $\varphi$ is the label of a node $n$ and $K$ is the set of labels of the nodes
  directly above $n$, then $\frac{K}{\varphi}$ is a rule from \tab{CCS}, with closed CCS
  expressions substituted for the variables $E,F,\ldots$.
\end{itemize}

\subsection{CSP}

CSP \cite{BHR84,OH86,BR85,Ho85} is parametrised with a set ${\cal A}$ of {\em communications\/};
\plat{$Act := \mathcal{A}\dcup \{\tau\}$} is the set of {\em actions}. Below, $a$, $b$
range over $\mathcal{A}$ and $\alpha$, $\beta$ over $Act$.
The set $\mathcal{E}$ of CSP terms is the smallest set including:
\begin{center}
\begin{tabular}{lll}
$\stp$ && \emph{inaction} \\
$\dv$ && \emph{divergence} \\
$(a\rightarrow E)$  & for $a\in \mathcal{A}$ and $E\in\mathcal{E}$ & \emph{prefixing}\\
$E \obox F$ & for $E,F\in\mathcal{E}$ & \emph{external choice} \\
$E \sqcap F$ & for $E,F\in\mathcal{E}$ & \emph{internal choice} \\
$E\|_AF$ & for $E,F\in\mathcal{E}$ and $A\subseteq{\cal A}$ & \emph{parallel composition} \\
$E\conceal b$ & for $b\in\mathcal{A}$ and $E\in\mathcal{E}$ & \emph{concealment} \\
$f(E)$ & for $E\in\mathcal{E}$ and $f:Act\rightarrow Act$ with $f(\tau)=\tau$ and
  $f^{-1}(a)$ finite& \emph{renaming} \\
$X$ & for $X\in\mathcal{X}$ & a \emph{process variable} \\
$\mu X\cdot E$ & for $E\in \mathcal{E}$ and $X\in \V$ & \emph{recursion}.
\end{tabular}
\end{center}
As in \cite{OH86}, I here leave out the guarded choice $(x:B\rightarrow P(x))$ and the
constant \mbox{\sc run} of \cite{BHR84}, and the inverse image and sequential composition operator,
with constant \mbox{\sc skip}, of \cite{BHR84,BR85}.
The semantics of CSP was originally given in quite a different way \cite{BHR84,BR85}, but \cite{OH86} provided an
operational semantics of CSP in the same style as the one of CCS, and showed its
consistency with the original semantics. It is this operational semantics I will use here;
it is given by the rules in \tab{CSP}. Let $\mathcal{L}:=\mathcal{A}$.
\begin{table}[htb]
\begin{center}
\framebox{$\begin{array}{ccccc}
\dv\goto{\tau}\dv &
(a\rightarrow E) \goto{a} E &
E \sqcap F \goto{\tau} E &
E \sqcap F \goto{\tau} F \\[2ex]
\displaystyle\frac{E\goto{a} E'}{E\obox F \goto{a} E'} &
\displaystyle\frac{F\goto{a} F'}{E\obox F \goto{a} F'} &
\displaystyle\frac{E\goto{\tau} E'}{E\obox F \goto{\tau} E'\obox F} &
\displaystyle\frac{F\goto{\tau} F'}{E\obox F \goto{\tau} E\obox F'} \\[4ex]
\displaystyle\frac{E\goto{\alpha} E'~~{\scriptstyle(\alpha\notin A)}}{E\|_AF \goto{\alpha} E'\|_AF} &
\multicolumn{2}{c}{
\displaystyle\frac{E\goto{a} E'~~F\goto{a} F'~~{\scriptstyle(a\in A)}}{E\|_AF \goto{a} E'\|_AF'}} &
\displaystyle\frac{F\goto{\alpha} F'~~{\scriptstyle(\alpha\notin A)}}{E\|_AF \goto{\alpha} E\|_AF'} \\[4ex]
\displaystyle\frac{E \goto{b} E'}{E\conceal b \goto{\tau} E'\conceal b} &
\displaystyle\frac{E \goto{\alpha} E'~~{\scriptstyle(\alpha\neq b)}}{E\conceal b \goto{\alpha} E'\conceal b} &
\displaystyle\frac{E \goto{\alpha} E'}{f(E) \goto{f(\alpha)} f(E')} &
\multicolumn{2}{c}{\mu X\cdot E \goto{\tau} E[\mu X\cdot E/X]}
\end{array}$}
\end{center}
\caption{Structural operational semantics of CSP}
\label{tab:CSP}
\end{table}

\subsection{Trace semantics and convergent weak bisimilarity}

I will compare the expressive power of CCS and CSP up two semantic equivalences: a linear
time and a branching time equivalence. For the former I take \emph{trace equivalence}
\cite{Ho80} and for the latter a version of weak bisimilarity that takes
divergence into account \cite{HP80,Sti87,Ab87,Wa90}---called \emph{convergent weak bisimilarity}
in \cite{vG93}. Unlike the standard weak bisimilarity of \cite{Mi90ccs}, this relation is
finer than the failures-divergences semantics of \cite{BHR84,OH86,BR85,Ho85}.

The relation $\mathord\Rightarrow \subseteq \T_{\rm CCS}\times \fL^* \times\T_{\rm CCS}$
is the transitive closure of $\rightarrow$ that abstracts from $\tau$-steps. Formally,
$\dto{}$ is the transitive closure of \plat{$\goto{\tau}$}
and $p\dto{a_1\cdots a_n}q$ for $n\mathbin\geq 0$ holds iff there are $p_0,p_1,\ldots,p_n$
with \mbox{$p_0\mathbin=p$}, $p_{i-1} \dto{}\goto{a_i}p_i$ for $i=1,\ldots,n$, and $p_n\dto{} q$.
Below, $\T$ is a set that contains $\T_{\rm CCS}$ and $\T_{\rm CSP}$.

\begin{definition}{traces}
The set $T(p)\subseteq \mathcal{L}^*$ of \emph{traces} of a process $p\mathbin\in \T$ is given by
$\s\in T(p)$ iff $\exists p'.~p\dto{\s} p'$.
Two processes $p,q\in\T$ are \emph{trace equivalent} if $T(p)=T(q)$.
\end{definition}

\begin{definition}{weak bisimilarity}
A relation $\B \subseteq \T\times\T$ is a \emph{weak bisimulation} \cite{Mi90ccs} if
\begin{itemize}
\item for any $p,p',q\in\T$ and $\s\in \mathcal{L}^*$ with $p\B q$ and
  $p\dto{\s}p'$, there is a $q'$ with $q\dto{\s}q'$ and $p'\B q'$,
\item for any $p,q,q'\in\T$ and $\s\in \mathcal{L}^*$ with $p\B q$ and
  $q\dto{\s}q'$, there is a $p'$ with $p\dto{\s}p'$ and $p'\B q'$.
\end{itemize}
Two processes $p,q\in\T$ are \emph{weakly bisimilar}, $p\bis{w} q$, if they are related by a
weak bisimulation.
\end{definition}
All we need to know about the \emph{convergent} weak bisimilarity ($\bis[\downarrow]{w}$)
is that a process that has a divergence cannot be related to a divergence-free process,
and that restricted to divergence-free processes it coincides with weak bisimilarity. Here
a process \emph{has a divergence} if it can do an infinite sequence of transitions that
from some point onwards are all labelled $\tau$.

Trace equivalence and (convergent) weak bisimilarity are congruences for CSP\@.  The
(convergent) weak bisimilarity fails to be a congruence for the $+$ of CCS, a problem that
is commonly solved by taking its congruence closure. I do not need to do this when
translating CSP into CCS, because correct translations need not be a congruence for the
whole target language.

Note that even when restricting CCS to just $0$, action prefixing and $+$, there is no
correct translation of this language into CSP up to the congruence closure of
$\bis[\downarrow]{w}$---this is a direct consequence of \cor{congruence}.

\subsection{A correct translation of CSP into CCS up to trace equivalence}\label{sec:CSP-CCS}

For any choice of a CSP set of communications $\mathcal{A}$, I create a CCS set of names
$\mathcal{B}$ and construct a translation from CSP with communications from $\mathcal{A}$
into CCS with names from $\mathcal{B}$.

Let $\mathcal{B}:= \{a,a',a''\mid a\in \mathcal{A}\}$, consisting of 3 disjoint copies of $\mathcal{A}$.
For $A\subseteq\mathcal{A}$, let $S_A$ be the recursive specification given by the single CCS equation
$\displaystyle \{X\mathbin=\!\sum_{a\in  A}\bar{a}.a'.a''.a'.X +
                             \!\!\!\!\sum_{a\in \mathcal{A}\!-A}\!\!\!\bar{a}.a''.X\}$
and $S'_A$ be the recursive specification given by the single CCS equation
$\displaystyle \{X\mathbin=\!\sum_{a\in  A}\bar{a}.\bar{a}'.\bar{a}'.X +
                             \!\!\!\!\sum_{a\in \mathcal{A}\!-A}\!\!\!\bar{a}.a''.X\}$.
Now, up to trace equivalence, and assuming that $P$ features names from $\mathcal{A}$ only,
$(P|\textbf{fix}_XS_A)\backslash \mathcal{A}$ is a process
that differs from $P$ by the replacement of each $a$-transition by a sequence of
transitions $a' a'' a'$ if $a\in A$, and by the single transition $a''$ otherwise.
Likewise,  $(P|\textbf{fix}_XS'_A)\backslash \mathcal{A}$ differs from $P$ by the
replacement of each $a$-transition by a $\bar{a}'\bar{a}'$ if $a\in A$, and $a''$ otherwise.
Let $\mathcal{A}':=\{a' \mid a\in\mathcal{A}\}$, and let the relabelling function $f$ be
such that $f(a'')=a$. Then the following is a correct translation of CSP into CCS up to
trace equivalence.

$\fT(X)=X$

$\fT(\mu X\cdot E) = \textbf{fix}_X\{X=\fT(E)\}$

$\fT(a\rightarrow E)=a.\fT(E)$

$\fT(\stp)=\fT(\dv)=0$

$\fT(E\sqcap F) = \fT(E\obox F) = \fT(E) + \fT(F)$

$\fT(E\conceal b) = (\fT(E) | \textbf{fix}_X\{X=\bar{b}.X\})\backslash \{b\}$

$\fT(f(E)) = \fT(E)[f]$

$\displaystyle \fT(E\|_AF) = \left(
\left( (\fT(E)|\textbf{fix}_XS_A)\backslash \mathcal{A}
\big| (\fT(F)|\textbf{fix}_XS'_A)\backslash \mathcal{A}
\right)\backslash \mathcal{A}'\right)[f]$

\subsection{The untranslatability of CSP into CCS up to convergent weak bisimilarity}
\label{sec:no translation}

In this section I show that there is no translation of CSP into CCS up to convergent weak
bisimilarity. Suppose that $\fT$ is such a translation.
Let $\rho:\V\rightarrow \T_{\rm CSP}$ and $\eta:\V\rightarrow \T_{\rm CCS}$ satisfy
$\rho(X)=\rho(Y)=(b\rightarrow\stp)\obox (b\rightarrow(c\rightarrow\stp))$ and
$\eta(X)=\eta(Y)=b.0+b.c.0$. Then \plat{$\rho\bis[\downarrow]{w}\eta$}.
So\vspace{-3pt}
\[
\fT(X\|_{\{b,c\}}Y)[\eta] = \denote{\fT(X\|_{\{b,c\}}Y)}_{\rm CCS}(\eta)
\bis[\downarrow]{w} \denote{X\|_{\{b,c\}}Y}_{\rm CSP}(\rho) \bis[\downarrow]{w} b.0+b.c.0.
\]
Let $\nu:\V\rightarrow \T_{\rm CCS}$ satisfy $\nu(X)=\nu(Y)=b.0$. By the same reasoning as above\vspace{-3pt}
\[
\fT(X\|_{\{b,c\}}Y)[\nu] \bis[\downarrow]{w} b.0.
\]
Since $b.0$ has no divergence, neither does $\fT(X\|_{\{b,c\}}Y)[\nu]$, so
there must be a state $p\in\T_{\rm CCS}$ with
\plat{$\fT(X\|_{\{b,c\}}Y)[\nu] \dto{} p \gonotto{\tau}$}.
By \cite[Proposition 7.1 (or 8)]{BFG04}, it follows from the operational semantics of CCS that
if $E[\sigma] \goto{\alpha} q$ for $E\mathbin\in\IT_{\rm CCS}$, $\sigma:\V\rightarrow \T_{\rm  CCS}$
and $q\in\T_{\rm CCS}$, then $q$ must have the form $F[\sigma']$ with $F\in\IT_{\rm CCS}$
and for each variable $W$ that occurs free in $F$ there is a variable $Z$ that occurs free
in $E$, such that either $\sigma(Z)=\sigma'(W)$ or
\plat{$\sigma(Z)\goto{\beta}\sigma'(W)$} for some $\beta\in Act$\footnote{In general
  multiple occurrences of $Z$ in $E$ may give rise to different associated variables $W$
  in $F$.}---moreover, $F$ depends on $E$ and on the existence of the $\beta$-transitions,
but not any other property of $\sigma$. So, for some $n\geq 0$,
$$\fT(X\|_{\{b,c\}}Y)[\nu] \goto{\tau} E_1[\nu_1]  \goto{\tau} E_2[\nu_2] \goto{\tau}\ldots
\goto{\tau} E_n[\nu_n] \gonotto{\tau}$$
where, for any free variable $Z$ of $E_i$, $\nu_i(Z)$ is either $0$ or $b.0$.
This execution path can be simulated by
$$\fT(X\|_{\{b,c\}}Y)[\eta] \goto{\tau} E_1[\eta_1]  \goto{\tau} E_2[\eta_2] \goto{\tau}\ldots
\goto{\tau} E_n[\eta_n] \gonotto{\tau}$$
where $\eta_i(Z)=b.0+b.c.0$ iff $\nu_i(Z)=b.0$ and $\eta_i(Z)=0$ iff
$\nu_i(Z)=0$---i.e.\ always choosing $\eta(Z)\goto{b}0$ over \plat{$\eta(Z)\goto{b}c.0$}.
By the properties of $\bis[\downarrow]{w}$, \plat{$E_n[\eta_n] \bis[\downarrow]{w} b.0+b.c.0$}.
So there is a process $E_{n+1}[\eta_{n+1}]$ with
\plat{$E_n[\eta_n]\goto{b}E_{n+1}[\eta_{n+1}]\dto{}\goto{c}$}.
It must be that $E_{n+1}[\eta_{n+1}]\bis[\downarrow]{w}c.0$.

The only rule in the structural operational semantics of CCS that has multiple premises
has a conclusion with label $\tau$. Furthermore, any rule with a $\tau$-labelled premise,
has a $\tau$-labelled conclusion. Hence, since the transition
\plat{$E_n[\eta_n]\goto{b}E_{n+1}[\eta_{n+1}]$} is not labelled $\tau$, its proof has only
one branch. This branch could stem from a transition from $\eta(X)$ or from $\eta(Y)$, but
not both. W.l.o.g.\ I assume it does not stem from $\eta(X)$.

Let $\xi:\V\rightarrow \T_{\rm CCS}$ satisfy $\xi(X)=b.0$ and $\xi(Y)=b.0+b.c.0$.
Since in the proofs of the transitions in the above path from $\fT(X\|_{\{b,c\}}Y)[\eta]$ the transition
\plat{$\eta(X)\goto{b}c.0$} is never used, that path can be simulated by\vspace{-2ex}
$$\fT(X\|_{\{b,c\}}Y)[\xi] \goto{\tau} E_1[\xi_1]  \goto{\tau} E_2[\xi_2] \goto{\tau}\ldots
\goto{\tau} E_n[\xi_n] \goto{b} E_{n+1}[\xi_{n+1}].$$
Note that \plat{$\fT(X\|_{\{b,c\}}Y)[\xi] \bis[\downarrow]{w} b.0$}.
Due to the properties of $\bis[\downarrow]{w}$ the above derivation can be extended with\vspace{-1ex}
$$E_{n+1}[\xi_{n+1}]  \goto{\tau} E_{n+2}[\xi_{n+2}] \goto{\tau}\ldots \goto{\tau} E_{n+k}[\xi_{n+k}]$$
ending in a \emph{deadlock} state, where no further transitions are possible.
This derivation, in turn, can be simulated by\vspace{-1ex}
$$E_{n+1}[\eta_{n+1}]  \goto{\tau} E_{n+2}[\eta_{n+2}] \goto{\tau}\ldots \goto{\tau} E_{n+k}[\eta_{n+k}],$$
still ending in a deadlock state. This contradicts $E_{n+1}[\eta_{n+1}]\bis[\downarrow]{w}c.0$.
\hfill $\Box$

\section{Valid translations up to a preorder}\label{sec:respects}

Let $\fL$ and $\fL'$ be languages with
$\denote{\ \ }_{\fL}:\IT_{\fL} \rightarrow ((\V\rightarrow\bV)\rightarrow\bV)$
and
$\denote{\ \ }_{\fL'}:\IT_{\fL'} \rightarrow ((\V\rightarrow\bV')\rightarrow\bV')$.
In this section I explore an alternative for the notion of a correct translation up to an
equivalence $\sim$. This alternative doesn't have a build-in requirement that $\sim$ must
be a congruence for $\fL$;\footnote{Moreover, it may be a preorder rather than an equivalence.}
however it only deals with semantic values denotable by closed terms.

Let $\T_\fL$ be the set of closed $\fL$-expressions, i.e.\ having no
free variables. The meaning $\denote{\p}_\fL(\rho)$ of a closed term
$\p\in \T_\fL$ is independent of the valuation $\rho:\V\rightarrow\bV$, and hence
denoted $\denote{\p}_\fL$.

\begin{definition}{respects}
A translation $\fT$ from $\fL$ into $\fL'$ \emph{respects} $\sim$ if (\ref{related}) holds
and $\denote{\fT(P)}_{\fL'}(\eta) \sim \denote{P}_\fL$ for all closed $\fL$-expressions $P\in\T_\fL$
and all valuations $\eta:\V\rightarrow\bU$, with $\bU:=\{v\in\bV'\mid \exists v\in \bV.~v'\sim v\}$.
\end{definition}

\begin{observation}{correct respects}
If $\fT$ is a correct translation from $\fL$ into $\fL'$ up to $\sim$,
then it respects $\sim$.
\end{observation}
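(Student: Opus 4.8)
The plan is to observe that the ``respects $\sim$'' condition is simply the restriction of the ``correct up to $\sim$'' condition to closed source terms, so the proof should amount to unfolding \df{correct translation} and \df{respects} and bridging the small gap between them. I would assume $\fT$ is a correct translation from $\fL$ into $\fL'$ up to $\sim$ and verify the two clauses of \df{respects}. Clause (\ref{related}) is literally part of \df{correct translation}, so it transfers with no argument.

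For the remaining clause I would fix a closed term $P\in\T_\fL$ and a valuation $\eta:\V\rightarrow\bU$ and aim to show $\denote{\fT(P)}_{\fL'}(\eta) \sim \denote{P}_\fL$. The first step is to manufacture a companion valuation $\rho:\V\rightarrow\bV$ with $\eta\sim\rho$: since every $\eta(X)$ lies in $\bU$, by the definition of $\bU$ it is $\sim$-equivalent to some element of $\bV$, and choosing one such element for each variable (the same appeal to choice already implicit after (\ref{related})) yields $\rho$ with $\rho(X)\sim\eta(X)$ for all $X$, i.e.\ $\eta\sim\rho$ in the sense of \df{equivalence of valuations}. Because $\bU\subseteq\bV'$, the valuation $\eta$ also has type $\V\rightarrow\bV'$, so the pair $(\eta,\rho)$ is exactly of the shape required by \df{correct translation}.

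The second step is to apply correctness up to $\sim$ to $P$ and the valuations $\eta,\rho$, which gives $\denote{\fT(P)}_{\fL'}(\eta) \sim \denote{P}_\fL(\rho)$. The third step is to discard the valuation on the right: since $P$ is closed, $\denote{P}_\fL(\rho)$ is independent of $\rho$ and equals $\denote{P}_\fL$, so the chain collapses to $\denote{\fT(P)}_{\fL'}(\eta) \sim \denote{P}_\fL$, which is what \df{respects} demands.

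I do not anticipate any genuine obstacle. The only point needing a moment's attention is the mismatch between the domains quantified over: $\fT$ is known to be correct for valuations ranging over all of $\bV'$, whereas ``respects'' only probes valuations ranging over the sub-domain $\bU$; one simply records that $\bU\subseteq\bV'$ and that the definition of $\bU$ is precisely what supplies the $\sim$-equivalent valuation $\rho$ into $\bV$.
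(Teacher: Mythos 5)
Your proof is correct and is exactly the intended argument: the paper states this as an Observation without proof, and the content is precisely the definition-unfolding you carry out (condition~(\ref{related}) transfers verbatim, the definition of $\bU$ supplies a companion valuation $\rho:\V\rightarrow\bV$ with $\eta\sim\rho$, and closedness of $P$ makes $\denote{P}_\fL(\rho)$ valuation-independent). Nothing is missing.
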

Usually one employs translations $\fT$ with the property that for any
$E\in\IT_\fL$ any free variable of $\fT(E)$ is also a free variable of $E$---I call these
\emph{free-variable respecting translations}, or \emph{fvr-translations}.
If there is at least one $Q\in\T_{\fL'}$ with $\denote{Q}_{\fL'}\in\bU$, then any
translation $\fT$ from $\fL$ into $\fL'$ can be modified to an fvr-translation $\fT^\circ$
from $\fL$ into $\fL'$, namely by substituting $Q$ for all free variables of
$\fT(E)$ that are not free in $E$.  This modification preserves the properties of
respecting $\sim$ and of being correct up to $\sim$.
An fvr-translation $\fT$ from $\fL$ into $\fL'$ \emph{respects} $\sim$ iff
$\denote{\fT(P)}_{\fL'} \sim \denote{P}_\fL$ for all closed $\fL$-expressions $P\in\T_\fL$.

\begin{observation}{fvr}
Let $\fT: \IT_\fL \rightarrow \IT_{\fL'}$ be an fvr-translation from $\fL$ into $\fL'$, and let
$\sim,\approx$ be equivalences (or preorders) on a class $\bZ \subseteq \bV\cup\bV'$, with
$\sim$ finer than $\approx$. If $\fT$ respects $\sim$, then it also respects $\approx$.

The identity is a $\sim$-respecting fvr-translation from any language into itself.

If $\sim$-respecting fvr-translations exists from $\fL_1$ into $\fL_2$ and from $\fL_2$ into $\fL_3$,
then there is a $\sim$-respecting fvr-translation from $\fL_1$ into $\fL_3$.
\end{observation}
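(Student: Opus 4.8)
Each of the three claims is the ``respects''/fvr counterpart of a fact already established for correct translations up to $\sim$: the first mirrors \thm{hierarchy}, the second \pr{identity}, and the third \thm{composition}. So the plan is simply to rerun those arguments. The decisive simplification is the one recorded just above the observation: an fvr-translation $\fT$ respects $\sim$ exactly when $\denote{\fT(P)}_{\fL'}\sim\denote{P}_\fL$ for every closed $P\in\T_\fL$ --- there are no valuations to carry around, because $\fT(P)$ is itself closed. The only point worth isolating up front is the trivial bookkeeping lemma underpinning everything: if $\fT$ is free-variable respecting then $\fT$ maps $\T_\fL$ into $\T_{\fL'}$, and the composite of two fvr-translations is again an fvr-translation (free variables can only shrink along each). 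Throughout, ``equivalence or preorder'' changes nothing: only reflexivity and transitivity of $\sim$ and $\approx$ are ever used.

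For the monotonicity claim, suppose the fvr-translation $\fT$ respects the finer relation $\sim$. Then (\ref{related}) holds for $\sim$, and since $v'\sim v$ implies $v'\approx v$ it holds for $\approx$; moreover, for closed $P$, from $\denote{\fT(P)}_{\fL'}\sim\denote{P}_\fL$ one gets $\denote{\fT(P)}_{\fL'}\approx\denote{P}_\fL$. By the characterisation quoted above, $\fT$ respects $\approx$. For the identity: $\mathit{id}:\IT_\fL\to\IT_\fL$ is plainly an fvr-translation, here $\bV=\bV'$ so (\ref{related}) holds by reflexivity of $\sim$ (take $v'=v$), and $\denote{\mathit{id}(P)}_\fL=\denote{P}_\fL\sim\denote{P}_\fL$ for closed $P$, again by reflexivity.

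For composition, let $\fT_1:\IT_{\fL_1}\to\IT_{\fL_2}$ and $\fT_2:\IT_{\fL_2}\to\IT_{\fL_3}$ be $\sim$-respecting fvr-translations, with $\denote{\ \ }_{\fL_i}$ interpreted in $\bV_i$ and $\sim$ living on a class containing $\bV_1\cup\bV_2\cup\bV_3$. By the bookkeeping lemma $\fT_2\circ\fT_1$ is an fvr-translation; (\ref{related}) for it follows by composing the two given instances of (\ref{related}) and using transitivity of $\sim$. Finally, for closed $P\in\T_{\fL_1}$ the term $\fT_1(P)$ is a closed $\fL_2$-term, so $\denote{\fT_2(\fT_1(P))}_{\fL_3}\sim\denote{\fT_1(P)}_{\fL_2}$ because $\fT_2$ respects $\sim$, while $\denote{\fT_1(P)}_{\fL_2}\sim\denote{P}_{\fL_1}$ because $\fT_1$ does; transitivity then gives $\denote{\fT_2\circ\fT_1(P)}_{\fL_3}\sim\denote{P}_{\fL_1}$, so $\fT_2\circ\fT_1$ respects $\sim$.

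I do not anticipate a genuine obstacle here: the whole content is the closed-term characterisation of ``respects'' for fvr-translations, together with the observation that fvr-translations preserve closedness and are closed under composition. Without that last point, the composition clause would drag us back into the valuation-based reasoning of \thm{composition}; with it in hand, all three parts collapse to reflexivity and transitivity of the relations involved.
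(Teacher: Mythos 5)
Your proof is correct and is exactly the argument the paper intends: the paper states this as an unproved Observation, relying precisely on the closed-term characterisation of ``respects'' for fvr-translations plus reflexivity and transitivity of $\sim$. Your explicit bookkeeping of condition~(\ref{related}) and of the fact that fvr-translations preserve closedness and compose is a welcome (and accurate) spelling-out of what the paper leaves implicit.
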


\noindent
Respecting an equivalence or preorder is a very weak correctness requirement for translations.
In spite of the separation result of \sect{no translation}, there trivially exists a
translation from CSP to CCS that respects $\bis[\downarrow]{w}$, or even strong bisimilarity.
This follows from the observation that---thanks to the arbitrary index sets
$I$ and $\dom(S)$ that may be used for choice and recursion---up to $\bis[\downarrow]{w}$ every
process graph is denotable by a CCS expression.
In particular, compositionality is in no way implied by respect for an equivalence.
It therefore makes sense to add compositionality as a separate requirement.
The following shows that also the notion of a compositional $\sim$-respecting
transition is a bit too weak.

\begin{example}{undenotable}
Let $\fL'$ be the language CCS without the recursion construct, but interpreted in a
domain of arbitrary process graphs (similar to the graph model of ACP \cite{BW90}).
Let $\fL$ be the same language, but with an extra operator $\_\!\_/\mathcal{L}$
that relabels all transitions into $\tau$. The compositional translation $\fT$ from $\fL$
into $\fL'$ with $\fT(X/\mathcal{L}):=0$ respects \plat{$\bis[\downarrow]{w}$}.
This is because the interpretation of any closed $\fL$-expression is a process graph
without infinite paths, and after relabelling all transitions into $\tau$ such a graph is
equivalent to $0$. Yet, there are process graphs $G$---those with infinite paths---that
cannot be denoted by closed $\fL$-expressions, and for which $G/\mathcal{L} \not\!\bis[\downarrow]{w} 0$,
demonstrating that $\fT$ should not be seen as a valid translation.
\end{example}
Based on this, I add the denotability of all semantic values as a requirement
of a valid translation.

\begin{definition}{valid}
A translation $\fT$ from $\fL$ into $\fL'$ is \emph{valid up to} $\sim$ if
it is compositional and respects $\sim$, while $\fL$ satisfies\vspace{-1ex}
\begin{equation}\label{denotable}
\forall v\in\bV.~\exists \p\in\T_\fL.~\denote{\p}_\fL=v\;.
\end{equation}
\end{definition}
The following theorem (in combination with \thm{compositionality} and \obs{correct respects})
shows that this notion of a valid translation is consistent with the notion of a correct
translation, and can be seen as extending that notion to situations where $\sim$ is not
known to be a congruence.

\begin{theorem}{valid correct}
Let $\fT: \IT_\fL \rightarrow \IT_{\fL'}$ be a translation from $\fL$ into $\fL'$,
and $\sim$ be a congruence for $\fT(\fL)$.
If $\fT$ is valid up to $\sim$, then it is correct up to $\sim$.
\end{theorem}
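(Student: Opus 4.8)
The plan is to show that validity up to $\sim$ supplies exactly the ingredients needed for correctness up to $\sim$: compositionality lets us propagate the correctness of $\fT$ on closed terms to arbitrary terms by structural induction, the denotability requirement~(\ref{denotable}) lets us reduce an arbitrary valuation to a closed substitution, and the hypothesis that $\sim$ is a congruence for $\fT(\fL)$ lets us pass from one choice of representatives back to the given valuation. Concretely, I need to prove that $\denote{\fT(E)}_{\fL'}(\eta)\sim\denote{E}_\fL(\rho)$ for all $E\in\IT_\fL$ and all $\eta:\V\rightarrow\bV'$, $\rho:\V\rightarrow\bV$ with $\eta\sim\rho$; note that~(\ref{related}) is already part of validity (it is built into \df{respects}), so only the main equation remains.

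First I would fix such $E$, $\eta$, $\rho$ with $\eta\sim\rho$. Using~(\ref{denotable}), for each $X\in\V$ pick a closed term $P_X\in\T_\fL$ with $\denote{P_X}_\fL=\rho(X)$, and let $\sigma:\V\rightarrow\T_\fL$ be the substitution $\sigma(X):=P_X$; then $\denote{\sigma}_\fL(\rho')=\rho$ for every $\rho'$ (the $P_X$ being closed), so by~(\ref{inductive meaning}) $\denote{E[\sigma]}_\fL=\denote{E}_\fL(\rho)$ for the closed term $E[\sigma]$. Next, since $\fT$ is compositional and $\fT\circ\sigma$ maps each $X$ to the closed term $\fT(P_X)$, we get $\fT(E[\sigma])\eqa\fT(E)[\fT\circ\sigma]$, hence by \post{alpha} and~(\ref{inductive meaning}) $\denote{\fT(E[\sigma])}_{\fL'}(\eta)=\denote{\fT(E)}_{\fL'}(\denote{\fT\circ\sigma}_{\fL'}(\eta))$, where $\denote{\fT\circ\sigma}_{\fL'}(\eta)(X)=\denote{\fT(P_X)}_{\fL'}$, a valuation I will call $\theta$. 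Because $\fT$ respects $\sim$ and each $P_X$ is closed, $\theta(X)=\denote{\fT(P_X)}_{\fL'}\sim\denote{P_X}_\fL=\rho(X)$; in particular $\theta$ takes values in $\bU$, and so does $\eta$ (by~(\ref{related}) applied coordinatewise, since $\eta\sim\rho$ with $\rho:\V\rightarrow\bV$). Moreover $\theta\sim\eta$, as both are $\sim$-equivalent to $\rho$ and $\sim$ is an equivalence.

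Now I assemble the pieces. Since $E[\sigma]$ is closed and $\fT$ respects $\sim$, $\denote{\fT(E[\sigma])}_{\fL'}(\theta)\sim\denote{E[\sigma]}_\fL=\denote{E}_\fL(\rho)$; but $\denote{\fT(E[\sigma])}_{\fL'}(\theta)=\denote{\fT(E)}_{\fL'}(\theta)$ only after I rewrite via the compositionality identity above, so more precisely $\denote{\fT(E)}_{\fL'}(\theta)\sim\denote{E}_\fL(\rho)$. Finally, since $\sim$ is a congruence for $\fT(\fL)$ and $\theta\sim\eta$ with $\theta,\eta:\V\rightarrow\bU$, \df{weak congruence} gives $\denote{\fT(E)}_{\fL'}(\eta)\sim\denote{\fT(E)}_{\fL'}(\theta)\sim\denote{E}_\fL(\rho)$, which is what was required.

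The main obstacle—and the only point requiring care—is the bookkeeping of types and the verification that the auxiliary valuation $\theta$ genuinely lands in $\bU$, since the congruence-for-$\fT(\fL)$ property is only available for valuations into $\bU$; this is where the respect hypothesis on closed terms and the definition of $\bU$ must be invoked in tandem. The rest is a routine chaining of~(\ref{inductive meaning}), \post{alpha}, compositionality, and transitivity of $\sim$; in fact, once $\theta$ is in hand the argument parallels the correctness half of the proof of \thm{compositionality}, with $P_X$ playing the role there played by the inductive hypothesis on proper subterms.
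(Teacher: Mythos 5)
Your proof is correct and follows essentially the same route as the paper's: denotability yields a closed substitution $\sigma$ with $\denote{\sigma(X)}_\fL=\rho(X)$, compositionality plus (\ref{inductive meaning}) turns $\denote{\fT(E[\sigma])}_{\fL'}$ into $\denote{\fT(E)}_{\fL'}$ evaluated at an auxiliary valuation, respect for $\sim$ on the closed term $E[\sigma]$ supplies the link to $\denote{E}_\fL(\rho)$, and congruence for $\fT(\fL)$ transfers the result from that auxiliary valuation to $\eta$. The one slip is writing $\theta(X)=\denote{\fT(P_X)}_{\fL'}$ without a valuation---for a general (non-fvr) translation $\fT(P_X)$ need not be closed---but this is immediately repaired by setting $\theta(X):=\denote{\fT(P_X)}_{\fL'}(\eta)$ (exactly the paper's $\nu$), since \df{respects} is quantified over all valuations $\eta:\V\rightarrow\bU$ and still gives $\theta\sim\rho$.
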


\begin{proof}
Suppose $\fT$ is valid up $\sim$.
Then $\denote{\fT(P)}_{\fL'}(\eta) \sim \denote{P}_\fL$ for all
all closed $\fL$-expressions $P\in \T_\fL$ and all valuations $\eta:\V\rightarrow\bU$.
To establish that $\fT$ is correct up to $\sim$,
let $\E\in \IT_\fL$ and let $\eta:\V\rightarrow\bV'$ and $\rho:\V\rightarrow \bV$ be
valuations with $\eta\sim\rho$. So $\eta:\V\rightarrow\bU$.
I need to show that $\denote{\fT(\E)}_{\fL'}(\eta) \sim \denote{\E}_\fL(\rho)$.

Let $\sigma:\V\rightarrow\T_\fL$ be a substitution with $\denote{\sigma(X)}_\fL=\rho(X)$
for all $X\in\V$---such a substitution exists by (\ref{denotable}).
Furthermore, define $\nu:\V\rightarrow\bV'$ by
$\nu(X):=\denote{\fT(\sigma(X))}_{\fL'}(\eta)$ for all $X\in\V$.
Since $\fT$ respects $\sim$ I have $\nu(X)\sim\rho(X)$ for all $X\in\V$; thus
$\eta\sim\rho\sim\nu$ and also $\nu:\V\rightarrow\bU$.\\
$\begin{array}[b]{@{}l@{~}ll@{}}
\mbox{Hence}~\denote{\fT(\E)}_{\fL'}(\eta)
& \sim \denote{\fT(\E)}_{\fL'}(\nu) & \mbox{since $\sim$ is a congruence for $\fT(\fL)$}\\
& = \denote{\fT(E)}_{\fL'}(\denote{\fT\circ\sigma}_{\fL'}(\eta)) & \mbox{expanding the definition of $\nu$} \\
& = \denote{\fT(E)[\fT\circ\sigma]}_{\fL'}(\eta) & \mbox{by (\ref{inductive meaning})}\\
& = \denote{\fT(E[\sigma])}_{\fL'}(\eta) & \mbox{by compositionality of $\fT$} \\
& \sim \denote{E[\sigma]}_{\fL} & \mbox{since $\fT$ respects $\sim$}\\
& = \denote{\E}_\fL(\denote{\sigma}_\fL) & \mbox{by (\ref{inductive meaning})} \\
& = \denote{\E}_\fL(\rho) & \mbox{by definition of $\rho$.}
\end{array}$
\end{proof}

\section{Related work}

The greatest expressibility result presented so far is by De Simone \cite{dS85}, who
showed that a wide class of languages, including CCS, SCCS, CSP and ACP, are expressible
up to strong bisimulation equivalence in {\sc Meije}.  Vaandrager \cite{Va93} established
that this result crucially depends on the use of unguarded recursion, and its noncomputable
consequences. {\em Effective} versions of CCS, SCCS, {\sc Meije} and ACP, not using
unguarded recursion, are incapable of expressing all effective De Simone
languages. Nevertheless, \cite{vG94a} isolated a \emph{primitive effective} dialect of ACP
(featuring primitive recursive renaming operators) in which a large class of primitive
effective languages, including primitive effective versions of CCS, SCCS, CSP and {\sc Meije},
can be encoded. All these results fall within the scope of the notion of translation and
expressibility from \cite{Bo85} and \cite{vG94a}, and use strong bisimulation as underlying
equivalence. 

In the last few years, a great number of encodability and separation results have
appeared, comparing CCS, Mobile Ambients, and several versions of the $\pi$-calculus (with
and without recursion; with mixed choice, separated choice or asynchronous)
\cite{
Boreale98,
Nestmann00,
NestmannP00,
Parrow00,
CardelliG00,
CardelliGG02,
BusiGZ03,
BusiGZ09,
CarboneM03,
Palamidessi03,
BPV04,
BPV05,
Palamidessi05,
Nestmann06,
PalamidessiSVV06,
PV06,
CCP07,
VPP07,
CCAV08,
HMP08,
Parrow08,
PV08,
VBG09,
PSN11,
PN12}; see
\cite{Gorla10b,Gorla10a} for an overview. Many of these results employ different and
somewhat ad-hoc criteria on what constitutes a valid encoding, and thus are hard to
compare with each other. Gorla \cite{Gorla10a} collected some essential features of
these approaches and integrated them in a proposal for a valid encoding that justifies
most encodings and some separation results from the literature.

Like Boudol \cite{Bo85} and the present paper, Gorla requires a compositionality condition
for encodings. However, his criterion is weaker than mine (cf.\ \df{compositionality}) in
that the expression $E_f$ encoding an operator $f$ may be dependent on the set of \emph{names}
occurring freely in the expressions given as arguments of $f$.
The reason for this weakening appears to be that it provides a method
for freeing up names that need to be fresh because of the special r\^ole they play in the
translation, but might otherwise occur in the expressions being translated.

To address the problem of freeing up names I advocate a slightly different approach,
already illustrated in \sect{CSP-CCS}:
Most languages with names are parametrised with the set of names that are allowed in
expressions. So instead of the single language CCS, there is an incarnation
CCS($\mathcal{A}$) for each choice of names $\mathcal{A}$. Likewise, there is an incarnation
CSP($\mathcal{A}$) of CSP for each $\mathcal{A}$. A priori, these parameters need not be
related. So rather than insisting that for every $\mathcal{A}$ the language
CCS($\mathcal{A}$) encodes CSP($\mathcal{A}$), I merely require that for each $\mathcal{A}$
there exists a $\mathcal{B}$ such that CCS($\mathcal{B}$) encodes CSP($\mathcal{A}$).
Now the translations obviously are also parametrised by the choice of $\mathcal{A}$, and
they may use names in $\mathcal{B}-\mathcal{A}$ as names that are guaranteed to be fresh.
It is an interesting topic for future research to see if there are any valid encodability
results \`a la \cite{Gorla10a} that suffer from my proposed strengthening of compositionality.

The second criterion of \cite{Gorla10a} is a form of invariance under name-substitution.
It serves to partially undo the effect of making the compositionality requirement
name-dependent. In my setting I have not yet found the need for such a condition.
This criterion as formalised in \cite{Gorla10a} is too restrictive. It forbids the
translation of the input process $a(x).E$ from value-passing CCS \cite{Mi90ccs} into the
CCS expression $\sum_{v\in \cal V}a_v.E[v/x]$, where $\cal V$ is a given (possibly
infinite) set of data values. The problem is that a renaming of the single name $a$
occurring in an expression $E$ of value-passing CCS, say into $b$, would require renaming
infinitely many names $a_v$ occurring in $\fT(E)$ into $b_v$, which is forbidden in \cite{Gorla10a}.
Yet this translation, from \cite{Mi90ccs}, appears entirely justified intuitively.

The remaining three requirements of Gorla might be seen as singling our a particular
preorder $\sqsubseteq$ for comparing terms and their translations. Since in
\cite{Gorla10a}, as in \cite{Bo85}, the domain
of interpretation consists of the closed expressions, and $\sqsubseteq$ is generally not a
congruence for the source or target languages, one needs to compare with the approach of
\sect{respects}, where $\sim$ is allowed to be a preorder.
The preorder presupposes a transition system with $\tau$-transitions (reduction), and a
notion of a success state; and compares processes based on these attributes only.

Hence Gorla's criteria are very close to an instantiation of mine with a particular
preorder. Further work is needed to sort out to what extent the two approaches have
relevant differences when evaluating encoding and separation results from the literature.
Another topic for future work is to sort out how dependent known encoding and separation
results are on the chosen equivalence or preorder. 

As a concluding remark, many separation results in the
literature\cite{CarboneM03,Palamidessi03,PalamidessiSVV06,PV06,PV08,HMP08} are
based on the assumption that parallel composition translates homomorphically,
i.e.\ $\fT(E|F)=\fT(E)|\fT(F)$.\footnote{This assumption is often defended by the theory that
non-homomorphic translations reduce the degree of concurrency of the source process---a
theory I do not share. Note that my translation of CSP into CCS in \sect{CSP-CCS} is not homomorphic.}
This applies for instance to the proof in \cite{HMP08} that there is no valid encoding from
the asynchronous $\pi$-calculus into CCS\@. In \cite{Gorla10a} this assumption is relaxed,
but the separation proof of \cite{Gorla10a} hinges crucially on the too restrictive form
of Gorla's second criterion. 
Whether the asynchronous $\pi$-calculus is expressible in CCS is therefore still wide open.
\vspace{-1ex}

\paragraph{Acknowledgement}
My thanks to an EXPRESS/SOS referee for careful proofreading.

\bibliographystyle{eptcs}
\bibliography{$HOME/Stanford/lib/abbreviations,$HOME/Stanford/lib/dbase,$HOME/Stanford/lib/biblio/glabbeek,$HOME/Stanford/lib/new}
\end{document}

Notation:
\IT        set of open terms
\T         set of closed terms
\fL        language
\fT        translation
\D         set of denotable objects
S          set of states
Act        alphabet of actions
E,F,G      expression or open term
f          operator in a language
n          arity of f
i          typical index
t_i        more open terms
X,Y        variables
\V         set of variables
P          (closed) process expression
G          process graph
G_P        process graph of P
\bD        (quotient) domain
\bC        (quotient) domain
\bV,\bW    domain of values
\bZ        unifying domain
\bU        common domain
\rho:    \V->\bV
\nu,\eta:\V->\bU,\bV'
\theta:  \V->\bD
\sigma:  \V->\T_\fL
p,q        elements of \bZ